\newtheorem{fact}[theorem]{Fact}
\newtheoremstyle{problemstyle}{\topsep}{\topsep}{}{0pt}{\sffamily\bfseries}{}{5pt plus 1pt minus 1pt}%
  {\textcolor{lipicsGray}{$\blacktriangleright$}\nobreakspace\thmname{#1}:\thmnote{ #3}}
\theoremstyle{problemstyle}
\newtheorem*{problem}{Problem}
\title{A Note on the Parameterised Complexity \mbox{of Coverability in Vector Addition Systems}}
\titlerunning{On the Parameterised Complexity of Coverability in VAS}
\author{Micha\l{} {Pilipczuk}}{Institute of Informatics, University of
  Warsaw, Poland \and \url{https://www.mimuw.edu.pl/~mp248287/}}{michal.pilipczuk@mimuw.edu.pl}{https://orcid.org/0000-0001-7891-1988}{\flag{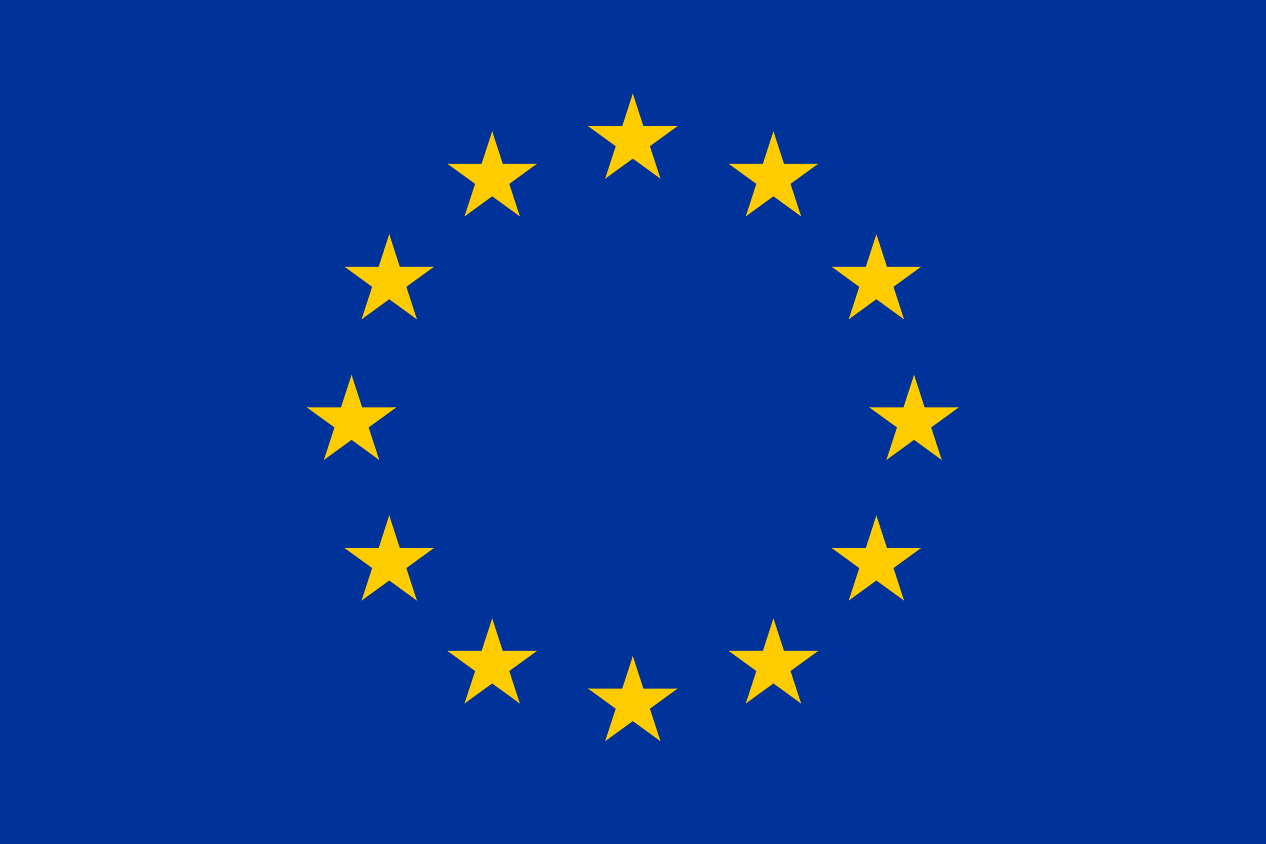}\flag{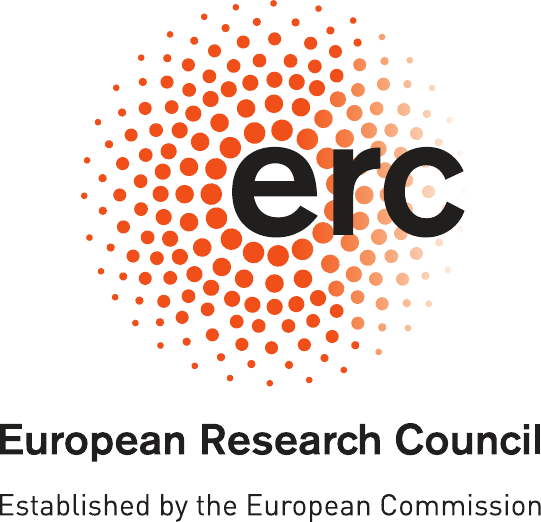}This
  work is a part of project BOBR that has received funding from the
  European Research Council (ERC) under the European Union’s Horizon
  2020 research and innovation programme (grant agreement No. 948057).}
\author{Sylvain {Schmitz}}{Universit\'e Paris Cit\'e, CNRS, IRIF, France \and \url{https://www.irif.fr/~schmitz} }{schmitz@irif.fr}{https://orcid.org/0000-0002-4101-4308}{}
\author{Henry {Sinclair-Banks}}{Institute of Informatics, University of Warsaw, Poland \and \url{http://henry.sinclair-banks.com}}{hsb@mimuw.edu.pl}{https://orcid.org/0000-0003-1653-4069}{This
  work is a part of project INFSYS that has received funding from the
  European Research Council (ERC) under the European Union’s Horizon
  2020 research and innovation programme (grant agreement No. 950398).}%
\authorrunning{Mi. Pilipczuk, S. Schmitz, and H. Sinclair-Banks} %
\keywords{vector addition system, Petri net, parameterised complexity,
  coverability} %
\begin{document}

\renewcommand{\le}{\leqslant}
\renewcommand{\leq}{\leqslant}
\renewcommand{\ge}{\geqslant}
\renewcommand{\geq}{\geqslant}
\renewcommand{\setminus}{-}

\def\vec#1{\mathchoice{\mbox{\boldmath$\displaystyle#1$}}
{\mbox{\boldmath$\textstyle#1$}}
{\mbox{\boldmath$\scriptstyle#1$}}
{\mbox{\boldmath$\scriptscriptstyle#1$}}}
\newcommand{\Oh}{\mathcal{O}}
\newcommand{\onenorm}[1]{\left\lVert#1\right\rVert_1}
\newcommand{\norm}[1]{\left\lVert#1\right\rVert}
\newcommand{\abs}[1]{\lvert#1\rvert}
\newcommand{\size}[1]{\lvert#1\rvert}

\newcommand{\+}[1]{\mathbb{#1}}
\newcommand{\?}[1]{\mathcal{#1}}
\newcommand{\eqby}[1]{\mathrel{\raisebox{-.1ex}{\ensuremath{\stackrel{\raisebox{-.25ex}{\scalebox{.5}{\upshape\textrm{#1}}}}{=}}}}}
\newcommand{\eqdef}{\eqby{def}}
\newcommand{\fdim}[1]{\mathrm{fdim}\,#1}
\newcommand{\fin}{\mathrm{fin}}
\definecolor{pastelamber}{HTML}{FFECB9}
\definecolor{pastelteal}{HTML}{9FCFE1}
\definecolor{pastelrose}{HTML}{F4B3C3}

\newcommand{\set}[1]{\{#1\}}
\newcommand{\sset}{\subseteq}
\newcommand{\NN}{\mathbb{N}}
\newcommand{\ZZ}{\mathbb{Z}}
\newcommand{\closure}[2]{\ensuremath{[#1]^{\text{#2}}}}
\newcommand{\para}{\ComplexityFont{para}\textsf{-}}
\newclass{\XNL}{XNL}

\newcommand{\Vv}{\mathcal{V}}
\newcommand{\Ww}{\mathcal{W}}
\newcommand{\Zz}{\mathcal{Z}}
\newcommand{\config}[2]{#1(\vec{#2})}
\newcommand{\counter}[1]{\mathsf{#1}}
\newcommand{\run}[3]{\Run{#1}{#2}{}{#3}}
\newcommand{\Run}[4]{#1\xrightarrow{#2}_{#3}{#4}}
\newcommand{\len}[1]{|#1|}%
\newcommand{\unarysize}[1]{\mathit{size}(#1)}
\newcommand{\binarysize}[1]{\mathit{bitsize}(#1)}

\newcommand{\Dd}{\mathcal{D}}
\newcommand{\recognises}[1]{L(#1)}
\newcommand{\initialstate}{q^{\text{init}}}

\tikzstyle{temp} = [circle, fill, red, inner sep = 1pt]
\tikzstyle{state} = [circle, draw, black, line width = 0.4mm, inner sep = 1.5mm]
\tikzstyle{transition} = [-{Stealth[width=1.5mm, length=1.8mm]}, headless-transition]
\tikzstyle{small-transition} = [-{Stealth[width=1.2mm, length=1.5mm]}, headless-transition]
\tikzstyle{headless-transition} = [black, line width = 0.4mm]
\tikzstyle{sigma-part} = [draw, line width = 0.4mm, minimum width=20mm, minimum height=15mm, rounded corners = 2mm]
\tikzstyle{checking-part} = [draw, line width = 0.4mm, minimum width=20mm, minimum height=12mm, rounded corners = 2mm]
\tikzstyle{example-part} = [draw, line width = 0.4mm, minimum width=30mm, minimum height=18mm, rounded corners = 2mm]
\tikzstyle{example-checking-part} = [draw, line width = 0.4mm, minimum width=30mm, minimum height=9mm, rounded corners = 2mm]
\tikzstyle{small-state} = [circle, draw = black, line width = 0.4mm, inner sep = 0.8mm]

\newcommand{\colour}[2]{\textcolor{#1}{#2}}
\newcommand{\hstodo}[1]{\todo[inline, linecolor=orange, backgroundcolor=orange!30, bordercolor=black]{\textbf{Henry:} #1}}
\newcommand{\stodo}[1]{\todo[inline, linecolor=orange, backgroundcolor=orange!30, bordercolor=black]{\textbf{Sylvain:} #1}}

\maketitle

\begin{abstract}
    We investigate the parameterised complexity of the classic
coverability problem for vector addition systems (VAS): given a finite
set of vectors $\vec V\subseteq \+Z^d$, an initial configuration $\vec
s\in \NN^d$, and a target configuration $\vec t\in \NN^d$, decide
whether starting from $\vec s$, one can iteratively add vectors from
$\vec V$ to ultimately arrive at a configuration that is larger than or
equal to $\vec t$ on every coordinate, while not observing any
negative value on any coordinate along the way.  We consider two
natural parameters for the problem: the dimension~$d$ and the size of
$\vec V$, defined as the total bitsize of its encoding.  We present
several results charting the complexity of those two
parameterisations, among which the highlight is that coverability for
VAS parameterised by the dimension and with all the numbers in the
input encoded in unary is complete for the class $\XNL$ under
PL-reductions.  We also discuss open problems in the topic, most
notably the question about fixed-parameter tractability for the
parameterisation by the size of $\vec V$.

\end{abstract}

\section{Introduction}
\subparagraph{Vector Addition Systems.}
Vector addition systems are a well-established model with a very
simple definition: a $d$-dimensional \emph{vector addition system}
(VAS)~\cite{KarpM69} is a finite set~$\vec V$ of vectors in~$\+Z^d$,
which defines a step relation between configurations in~$\+N^d$:
$\vec u\to_{\vec V}\vec u+\vec v$ for all $\vec u$ in $\+N^d$ and
$\vec v$ in $\vec V$, provided that $\vec u+\vec v$ is in~$\+N^d$. One usually interprets the $d$ coordinates of a configuration $\vec u\in \+N^d$ as $d$ {\em{counters}} that may take non-negative integer values; then $\vec V$ represents a set of allowed updates to the counter values, which can be applied only if none of the counters becomes negative.

In spite of this apparent simplicity, vector addition systems can
exhibit highly complex behaviours.  Famously, their \emph{reachability
problem}
is \ComplexityFont{ACKERMANN}-complete~\cite{LerouxS19,Leroux22,CzerwinskiO22}:
given as input a $d$-dimensional VAS~$\vec V$, an initial
configuration $\vec s\in\+N^d$, and a target configuration~$\vec
t\in\+N^d$, the reachability problem asks whether $\vec t$ is
reachable from~$\vec s\in\+N^d$ in~$\vec V$, i.e., whether $\vec
s\to^\ast_{\vec V}\vec t$.  This combination of a simple definition
with rich behaviours makes vector addition systems---along with the
equivalent model of Petri nets---well-suited whenever one needs to
model systems managing multiple discrete resources, e.g., threads in
concurrent computations, molecules in chemical reactions~\cite{Aris65, Aris68}, organisms in
biological processes~\cite{BaldanCMS10}, etc., but also as a theoretical tool involved in
establishing decidability and complexity statements for a variety of
decision problems in logic, formal languages, verification,
etc.~\cite[Section~5]{siglog/Schmitz16}.

\subparagraph{The Coverability Problem.}
In this note, we are interested in a decision problem with a less
extreme complexity: given the same input, the \emph{coverability
problem} asks whether there exists a coordinate-wise larger or equal
configuration $\vec t'\sqsupseteq\vec t$ such that $\vec
s\to^\ast_{\vec V}\vec t'$.  This relaxation of the reachability
problem was first shown decidable by Karp and
Miller~\cite{KarpM69}, before being proven \EXPSPACE-hard by
Lipton~\cite{Lipton76} and to belong to \EXPSPACE\ by
Rackoff~\cite{Rackoff78}.

Like reachability, coverability in vector addition systems
inter-reduces with numerous decision problems, notably in relation to
the automated verification of safety properties in concurrent or
distributed
systems~\cite{KanovichRS09,KaiserKW10,GantyM12,DiCosmoMZZ13,Esparza14,GeeraertsHS15,AlpernasPR19,BaumanMTZ22},
as well as reasoning on data-aware logics or
systems \cite{DeckerHLT14,GrigoreT16,AbdullaAAMR18}.  These
applications have motivated a thorough investigation of the
problem~\cite{KarpM69,Lipton76,Rackoff78,BozzelliG11,LazicS21,KunnemannMSSW23,SchmitzS24}
and the development of several tools specifically targeted at solving
it~\cite{EsparzaLMMN14,BlondinFHH16,GeffroyLS18}.

\subparagraph{Towards Parameterised Complexity.}  Rosier and
Yen~\cite{RosierY86} refined Rackoff's analysis to identify the
contribution of several parameters to the final complexity of the
coverability problem.  Among those, the main parameter driving the
complexity is the
\emph{dimension}, i.e., the number of counters of the system.  Indeed,
Rackoff obtains his result by showing a bound on the length of the
shortest covering runs from the source to the target configuration,
which is in $n^{2^{\Oh(d\log d)}}$ when parameterised by the
dimension~$d$, where $n$ is the size of the input encoded in
unary. (While \EXPSPACE-completeness holds with both a unary and a
binary encoding, the complexity landscape changes as soon as one
isolates the dimension as a parameter).

This upper bound on the length of shortest covering runs in
unary-encoded VAS was recently improved to $n^{2^{\Oh(d)}}$ by K{\"{u}}nnemann, Mazowiecki, Sch{\"{u}}tze,
Sinclair{-}Banks, and Węgrzycki~\cite[Theorem~3.3]{KunnemannMSSW23},
and this matches the $n^{2^{\Omega(d)}}$ lower bound in the family of
systems constructed by Lipton~\cite{Lipton76}.  In turn, this upper
bound on the length of the shortest covering runs entails that the
coverability problem with a unary encoding can be solved either in
non-deterministic space $2^{\Oh(d)}\cdot\log n$ or in deterministic time
$n^{2^{\Oh(d)}}$~\cite[Corollary~3.4]{KunnemannMSSW23}; moreover, the
latter time bound is also achieved by the classical \emph{backward
coverability algorithm}~\cite[Corollary~4.5]{SchmitzS24}.  Finally, as
also shown by K{\"{u}}nnemann et al.\ through a parameterised
reduction from the parameterised clique
problem \lang{p\textsf-CLIQUE}, this time bound is optimal if one
assumes the Exponential Time Hypothesis: under the ETH, no algorithm
running in deterministic time $n^{o(2^d)}$ for coverability may
exist~\cite[Theorem~4.2]{KunnemannMSSW23}.

\medskip
Surprisingly, in spite of this long line of results focusing on
coverability in vector addition systems, and in particular when
isolating the dimension as a key parameter, the question of the
parameterised complexity of the problem has not been considered
(see \cref{sec:related} for literature on other parameterisations on Petri nets,
which are not directly comparable).\clearpage%

\begin{problem}[\lang{p\textsf-dim\textsf-COVERABILITY(VAS)}]\hfill
\begin{description}
\item[\hspace{1.15em}input] a $d$-dimensional VAS $\vec V$, an initial configuration
$\vec s\in\+N^d$, and a target
configuration~$\vec t\in\+N^d$
\item[\hspace{1.15em}parameter] $d$
\item[\hspace{1.15em}question] does $\vec s$ cover~$\vec t$ in~$\vec V$, i.e., does there
exist~$\vec t'\sqsupseteq\vec t$ such that $\vec s\rightarrow_{\vec{V}}^\ast\vec t'$?
\end{description}
\end{problem}

\noindent From the viewpoint of parameterised complexity, when using a
unary encoding, the $n^{2^{\Oh(d)}}$ deterministic time bounds
from~\cite{KunnemannMSSW23,SchmitzS24} immediately yield
that \lang{p\textsf-dim\textsf-COVERABILITY(VAS)} is in \XP, while the
parameterised reduction from \lang{p\textsf-CLIQUE}
in~\cite{KunnemannMSSW23} shows $\W[1]$-hardness
(see \cref{fig:fp-hierarchies} for an overview of the parameterised
complexity classes discussed in this note).

\subparagraph{Fixed Systems and Parameterisation by Size.}
A long-standing open question on decision problems for VAS, already
raised by Hack~\cite[page~172]{Hack76} and revived in recent
years~\cite{Jecker22,DraghiciHR24,Czerwinski25}, is whether any
meaningful statements can be made about the complexity for
a \emph{fixed} VAS.  In this direction, Draghici, Haase, and
Ryzhikov~\cite{DraghiciHR24} have recently shown that there exists a
fixed~$\vec V$ such that the coverability problem (and thus also the
reachability problem) for $\vec V$ with a binary encoding of the
initial/target configurations is already \PSPACE-hard.  This
complexity of coverability in a fixed VAS is also related to a
question on the length of shortest covering runs by
Czerwiński~\cite{Czerwinski25}.  A natural way to approach these
questions within the framework of parameterised complexity is to ask
about the complexity of coverability when parameterised by
the \emph{size} of the encoding of the system, which we denote by
$\norm{\vec V}$.

\begin{problem}[$\lang{p\textsf-size\textsf-COVERABILITY(VAS)}$]\hfill
\begin{description}
\item[\hspace{1.15em}input] a $d$-dimensional VAS $\vec V$, an input configuration $\vec s\in\+N^d$, and a target
configuration~$\vec t\in\+N^d$
\item[\hspace{1.15em}parameter] $\norm{\vec V}$
\item[\hspace{1.15em}question] does $\vec s$ cover~$\vec t$ in~$\vec V$, i.e., does there
exist~$\vec t'\sqsupseteq\vec t$ such that $\vec s\rightarrow_{\vec V}^\ast\vec t'$?
\end{description}
\end{problem}

\noindent This is an a priori easier problem than the one parameterised by the
dimension, as there is a straightforward parameterised reduction from
this problem to the one of coverability parameterised by dimension
(see \cref{ex:fpred}).

\subparagraph{Contributions.}
Our main contribution in this note is to refine the bounds
and pinpoint the exact parameterised complexity
of \lang{p\textsf-dim\textsf-COVERABILITY(VAS)}: the problem is
\XNL-complete under PL reductions with a unary encoding (see~\cref{sec:unary}), and
\para\PSPACE-complete under FPT reductions with a binary encoding
(see \cref{sec:binary}).  These results are mainly applications of the
rich literature dedicated to the coverability problem in vector
addition systems: the upper bounds stem from the $2^{\Oh(d)}\cdot\log n$
non-deterministic space bounds from Rackoff's and subsequent
works~\cite{Rackoff78,RosierY86,KunnemannMSSW23},
while \para\PSPACE-hardness with a binary encoding follows from
the \PSPACE-hardness of coverability in a fixed
VAS~\cite[Corollary~2]{DraghiciHR24}.  The \XNL\ lower bound is a new
proof, which relies on a result of Wehar~\cite{Wehar16} on the
intersection non-emptiness problem for finite automata.  Importantly
given the dearth of `natural' complete problems for these two
parameterised complexity classes and in particular for~\XNL, these
results significantly enrich the library of known complete problems,
as most of the inter-reducible problems we mentioned earlier also have
natural parameters that correspond to the dimension.

As a consequence of these results, we also conclude
that \lang{p\textsf-size\textsf-COVERABILITY(VAS)}
is \para\PSPACE-complete when using a binary encoding
(see \cref{sec:binary}), and in \XNL\ when using a unary encoding
(see~\cref{sec:unary}).  Our motivation here is to restate the
questions about the complexity of coverability and reachability in
fixed VAS~\cite{Jecker22,DraghiciHR24,Czerwinski25} within the
framework of parameterised complexity, which we discuss more extensively
in \cref{sec:conclusions}.

\section{Preliminaries}
\subsection{Parameterised Problems, Classes, and Reductions}
Let $\Sigma$ be a finite alphabet.  While a decision problem is just a
language $L \sset \Sigma^*$, a \emph{parameterised problem} is a set
$P \sset \Sigma^*\times \NN$.  If the pair $(x, k) \in \Sigma^* \times
\NN$ is an instance of a parameterised problem, we refer to $x$ as the
\emph{input} to the problem and we refer to $k$ as the
\emph{parameter}. We use the shorthand $n\eqdef|x|$ for the length of
$x$, whenever this does not create confusion.

The framework of parameterised complexity has led to rich hierarchies
of complexity classes, notably refining the distinction between \P\
and \NP\ in the presence of parameters through the class \FPT\ of
fixed-parameter tractable problems, along with classes of intractable
parameterised problems: the $\W$- and $\A$-hierarchies, and the
complexity classes~$\para\NP$ and~$\XP$; see \cref{fig:fp-hierarchies}
for a depiction of these classes.  A broad introduction to
parameterised complexity theory can be found in the book of Flum and
Grohe~\cite{flum}, see also the survey of de Haan and
Szeider~\cite{HaanS19} for a comprehensive overview of parameterised
classes of high complexity.  We provide below a basic recollection of
material that will be relevant to us, based on these two sources.

\subparagraph{Parameterised Reductions.}
An \emph{FPT reduction} (respectively a \emph{PL reduction}, which
stands for \emph{parameterised logspace)} from the parameterised
problem $P_1 \sset \Sigma_1^* \times \NN$ to the parameterised problem
$P_2 \sset \Sigma_2^* \times \NN$ is a
mapping \hbox{$R\colon \Sigma_1^* \times \NN \to \Sigma_2^* \times \NN$}
such that
\begin{enumerate}[(1)]
\item for all $(x,k) \in \Sigma_1^* \times \NN$, we have $(x, k) \in P_1$ if
  and only if $R(x,k) \in P_2$;
\item there is a computable function $f\colon \NN \to \NN$ and a
  constant $c \in \NN$ such that $R$ is computable in time $f(k)\cdot
  n^c$ (respectively computable in space $f(k) + c\cdot\log(n)$); and
\item there is a computable function $g\colon \NN \to \NN$ such that
  for all $(x, k) \in \Sigma_1^* \times \NN$, say with $R(x, k) = (x',
  k')$, we have $k' \leq g(k)$.
\end{enumerate}

\begin{figure}[tb]
  \centering
    \scalebox{.68}{\begin{tikzpicture}[node distance=.7cm, semithick]
    \node[label={[color=lipicsGray,font=\footnotesize]-1:$=\class{DTIME}\big(f(k)\cdot\poly(n)\big)$}] (FPT) {\FPT};
    \node[above=of FPT] (A1) {$\W[1]=\A[1]$};
    \node[above left=.3cm and .3cm of A1] (W2) {$\W[2]$};
    \node[above right=1cm and .3cm of A1] (A2) {$\A[2]$};
    \node[above=.6cm of W2] (W3) {$\W[3]$};
    \node[above=.6cm of A2] (A3) {$\A[3]$};
    \node[above=1.8cm of W3] (WS) {$\W[\SAT]$};
    \node[above=.6cm of A3] (AWs) {$\AW[\ast]$};
    \node[above=.6cm of WS] (WP) {$\W[P]$};    
    \node[above=.6cm of AWs] (AWS) {$\AW[\SAT]$};
    \node[above=.6cm of AWS] (AWP) {$\AW[P]$};
    \node[above left=.14cm and 2.35cm of WP,
      label={[color=lipicsGray,font=\footnotesize]181:$\class{NTIME}\big(f(k)\cdot\poly(n)\big)=$}] (pNP) {$\para\NP=\para\Sigma_1^{\P}$};
    \node[above right=1.3cm and 1.8cm of AWS,
      label={[color=lipicsGray,font=\footnotesize]-1:$=\class{NSPACE}\big(f(k)\cdot\func{log}(n)\big)$}] (XNL) {\XNL};
    \node[above left=.8cm and 0.3cm of AWP] (inter) {$\XP\cap\para\PSPACE$};
    \node[above left=1.6cm and .7cm of XNL,
      label={[color=lipicsGray,font=\footnotesize]-2:$=\class{DTIME}\big(n^{f(k)}\big)$}] (XP) {\XP};
    \node[above right=2cm and .3cm of pNP,
      label={[color=lipicsGray,font=\footnotesize]181:$\class{SPACE}\big(f(k)\cdot\poly(n)\big)=$}] (pPSPACE) {$\para\PSPACE$};
    \draw[->] (FPT) -> (A1);
    \draw[->] (A1) -> (W2);
    \draw[->] (A1) -> (A2);
    \draw[->] (W2) -> (W3);
    \draw[->] (W2) -> (A2);
    \draw[->] (A2) -> (A3);
    \draw[->] (W3) -> (A3);
    \draw[dotted,->] (W3) -> (WS);
    \draw[dotted,->] (A3) -> (AWs);
    \draw[->] (AWs) -> (AWS);
    \draw[->] (WS) -> (WP);
    \draw[->] (WS) -> (AWS);
    \draw[->] (AWS) -> (AWP);
    \draw[->] (WP) -> (AWP);
    \path[->] (AWS) edge[sloped] node[near end,font=\footnotesize,fill=white,inner sep=.8pt]{$\func{fpt}$} (XNL);
    \draw[->] (XNL) -> (inter.south east);
    \draw[->] (WP) -> (pNP);
    \draw[->] (AWP) -> (inter);
    \draw[->] (inter) -> (pPSPACE);
    \draw[->] (inter) -> (XP);
    \draw[->] (pNP) -> (pPSPACE);
  \end{tikzpicture}}
  \caption{\label{fig:fp-hierarchies}The parameterised complexity
    classes considered in this note and their known relations to the
    $\W$- and $\A$-hierarchies.  Arrows
    \tikz[baseline=-0.79ex,every node/.style={inner sep=.5pt}]{\node(C){\class{C}};\node[right=.3cm of C](D){\class{D}};\draw[->]
      (C) -> (D);} denote inclusions
    $\class{C}\subseteq\class{D}$; an arrow
    \tikz[baseline=-0.79ex,every node/.style={inner sep=.5pt}]{\node(C){\class{C}};\node[right=.7cm of
        C](D){\class{D}};\path[->] (C) edge[sloped] node[font=\tiny,fill=white]{$\func{fpt}$} (D);}  denotes an
   inclusion $\class{C}\subseteq\closure{\class{D}}{\func{fpt}}$ into
   the closure of~\class{D} under FPT reductions.}
\end{figure}

\subparagraph{Parameterised Space Complexity Classes.}
Although these classes are perhaps less well-known, the parameterised
paradigm can be also applied to space complexity.

Regarding nondeterministic logarithmic space complexity, a
parameterised problem $P \sset \Sigma \times \NN$ is in
(uniform) \class{XNL}~\cite[Proposition~18]{ChenFG03} if there is a
computable function $f\colon\NN \to \NN$ and a non-deterministic
algorithm that, given a pair $(x, k) \in \Sigma^* \times \NN$, decides
whether $(x, k) \in P$ in space at most $f(k)\cdot\log(n)$.  In
particular, $\AW[\SAT]$ is included in the closure of \XNL\ under FPT
reductions by~\cite[Proposition~23]{ChenFG03}.
When working with parameterised complexity classes
like \FPT, \XP, \para\PSPACE, etc., FPT reductions suffice, but
for \XNL, one needs to work with PL reductions.  When stating
completeness results, we always specify what type of reductions we
have in mind.

Regarding polynomial space complexity, a parameterised problem $P
\sset \Sigma^* \times \NN$ is in \para\PSPACE\ if there is a
computable function $f\colon\NN \to \Sigma^*$ and a problem $L \sset
\Sigma^*$ such that $L \in \PSPACE$ and for all instances $(x, k) \in
\Sigma^* \times \NN$, we have $(x, k) \in P$ if and only if $(x, f(k))
\in L$.  Thus, \para\PSPACE\ consists of all problems that can be
decided using at most polynomial space after some precomputation that
only involves the parameter.  One can also view \para\PSPACE\ as the
space complexity-concerned analogue of \FPT.  In fact, it is easy to
prove (see \cite[Exercise~8.40]{flum}) that a parameterised problem
$P$ is in \para\PSPACE\ if and only if it can be solved in FPT space; that is, in space bounded by $f(k)\cdot n^c$ for some
computable function $f\colon\NN\to \NN$ and $c\in \NN$.  In order to establish
\para\PSPACE-hardness under FPT reductions, it suffices to prove that
there exists some fixed value of the parameter~$k$ for which the
problem is \PSPACE-hard under polynomial-time reductions (see
\cite[Corollary~2.16]{flum}).

\subsection{Vector Addition Systems and Related Models}

As explained in the introduction, vector addition systems are
equivalent to a number of models; in particular, \emph{vector addition
systems with states} are more convenient in order to prove complexity
lower bounds, while \emph{Petri nets} are better suited for modelling
concurrent systems.  This section succinctly introduces vector
addition systems with states and Petri nets and here we argue that the
usual reduction between their versions of the coverability problem
holds also in the parameterised setting.

\subparagraph{Basic Notation.}  We use bold font for vectors.  We
index the $i$-th component of a vector $\vec{v}$ with square brackets
by writing $\vec{v}[i]$.  Given two vectors $\vec{u}, \vec{v} \in
\ZZ^d$, we write $\vec{u} \sqsubseteq \vec{v}$ if $\vec{u}[i] \leq
\vec{v}[i]$ for all $i \in \set{1,\ldots, d}$.  Given a vector
$\vec{v}\in \ZZ^d$, we define $\onenorm{\vec{v}} \eqdef
\abs{\vec{v}[1]} + \ldots + \abs{\vec{v}[d]}$ and
$\norm{\vec{v}}_\infty \eqdef \max\set{\abs{\vec{v}[1]}, \ldots,
  \abs{\vec{v}[d]}}$.  When working with unary encodings, we define
the size of a vector $\vec v\in\+Z^d$ as $\unarysize{\vec
  v}\eqdef\|\vec v\|_1$, and when using a binary encoding, as
$\binarysize{\vec v}\eqdef d\cdot(\log_2(\|\vec v\|_\infty+1))$.  We
write $\|\vec v\|$ when the encoding is implicit.  Then the
\emph{size} of a vector addition system~$\vec V$ in either encoding is
$\|\vec V\|\eqdef\sum_{\vec v\in\vec V}\|\vec v\|$, thus defining
$\unarysize{\vec V}$ when encoded in unary and $\binarysize{\vec V}$
when encoded in binary. Note that in both cases, we have $\|\vec V\|\geq d$ in
a non-trivial VAS~$\vec V$ of dimension~$d$.

\begin{remark}\label{ex:fpred}
  For an input VAS~$\vec V$ of some dimension~$d$, an initial
  configuration~$\vec s$, and a target configuration~$\vec t$, the map
  from $(\langle\vec V,\vec s,\vec t\rangle,\norm{\vec V})$ to
  $(\langle\vec V,\vec s,\vec t\rangle,d)$ is a PL reduction (and thus also
  an FPT reduction) from the
  \lang{p\textsf-size\textsf-COVERABILITY(VAS)} problem to the
  \lang{p\textsf-dim\textsf-COVERABILITY(VAS)} problem in both a
  unary and a binary encoding.
\end{remark}

\subparagraph{Vector Addition Systems with States.}
A $d$-dimensional \emph{vector addition system with states}
($d$-VASS)~\cite{HopcroftP79} is a pair $\?V= (Q, T)$ consisting of a
finite set of states $Q$ and a finite set of transitions $T \sset Q
\times \ZZ^d \times Q$.  To discuss the complexity of decision
problems for VASS, we define the size of a VASS $\Vv = (Q, T)$
as $\|\Vv\|\eqdef \size{Q} + \sum_{(p,\vec{x},q) \in
  T}\|\vec{x}\|$. %

Each index $i\in\{1,\dots,d\}$ can be seen as a counter with a
valuation in~$\+N$: a \emph{configuration} of a $d$-VASS is a pair
$(q, \vec{v}) \in Q \times \NN^d$ consisting of the current state $q$
and the current counter values $\vec{v}$; we use the notation
$\config{q}{v}$ for configurations.  Given two configurations
$\config{p}{u}$ and $\config{q}{v}$, there is a step
$\run{\config{p}{u}}{}{\config{q}{v}}$ if there exists a transition
$t = (p, \vec{x}, q) \in T$ such that $\vec{u} + \vec{x} = \vec{v}$;
we may refer to $\vec{x}$ as the \emph{update} of the transition $t =
(p, \vec{x}, q)$, and will also write
$\run{\config{p}{u}}{t}{\config{q}{v}}$ to emphasise that
transition~$t$ was taken from $\config{p}{u}$ to $\config{q}{v}$.

A \emph{path} in a VASS is a sequence of transitions $((p_1,
\vec{x_1}, q_1), \ldots, (p_m, \vec{x_m}, q_m))$ such that $p_{i+1} =
q_i$ for every $i \in \set{1, \ldots, m-1}$; then its \emph{length}
$\len{\pi}\eqdef m$ is its number of transitions.  A \emph{run} in a
VASS is a sequence of configurations $(\config{q_0}{v_0}, \ldots,
\config{q_m}{v_m})$ such that, for every $i \in \set{1, \ldots, m}$,
$\run{\config{q_{i-1}}{v_{i-1}}}{}{\config{q_i}{v_i}}$; in other
words, $(q_{i-1}, \vec{v_i}-\vec{v_{i-1}}, q_i) \in T$.  Let $\pi =
(t_1, \ldots, t_m)$ be a path and let $(\config{q_0}{v_0}, \ldots,
\config{q_m}{v_m})$ be a run such that, for every $i \in \set{1,
  \ldots, m}$,
$\run{\config{q_{i-1}}{v_{i-1}}}{t_i}{\config{q_i}{v_i}}$. Then we
write $\run{\config{q_0}{v_0}}{\pi}{\config{q_m}{v_m}}$ to denote that
the run is performed along the path~$\pi$, and
$\run{\config{q_0}{v_0}}{\ast}{\config{q_m}{v_m}}$ if the precise path
is not relevant.

\begin{figure}[tbp]
  \begin{minipage}[t]{.47\textwidth}
  \begin{subfigure}[t]{\textwidth}
    \centering
    \begin{tikzpicture}[auto,on grid,node distance=2cm,every
        place/.style={draw=black!70,fill=black!5,very thick,minimum size=6mm}]
      \node[place,label=above:$p$](p){};
      \node[transition,right=of p,draw=black!70,fill=black!5,very thick,label=above:$t$,minimum
        width=2mm,minimum height=6mm](t){}
      edge[pre,bend right,thick](p);
      \path[->,thick] (t) edge[bend left]
      node[font=\footnotesize,inner sep=2pt]{$2$} (p);
    \end{tikzpicture}
    $$\{\!|p|\!\}\xrightarrow{t}\{\!|p,p|\!\}$$
    \caption{\label{fig:pn}A Petri net with a single place and a run.}
  \end{subfigure}\\[1.5em]
  \begin{subfigure}[t]{\textwidth}
    \centering
    \begin{tikzpicture}[auto,on grid,node distance=2cm]
      \node[state,draw=black!70,fill=black!5,very thick,minimum size=6mm](1){$q_0$};
      \node[state,draw=black!70,fill=black!5,very thick,minimum size=6mm,right=of 1](2){$q_t$};
      \path[->,thick] (1) edge[bend left] node[font=\footnotesize,inner sep=2pt]{$-1$} (2)
                (2) edge[bend left] node[font=\footnotesize,inner sep=2pt]{$2$} (1);
    \end{tikzpicture}
    $$q_0(1)\to q_t(0)\to q_0(2)$$
    \caption{\label{fig:vass}A 1-VASS simulating the Petri net of \cref{fig:pn}.}
  \end{subfigure}
  \end{minipage}\hfill
  \begin{subfigure}[t]{0.47\textwidth}
    \vspace*{-1.16cm}
    \centering\footnotesize
    \begin{align*}
      \vec V&\eqdef\big\{\vec v_0,\vec v_0',\vec
      v_{(0,-1,t)},%
      \vec v_{t},\vec v'_t,\vec v_{(t,2,0)}\big\}
      \intertext{where}
      \vec v_0&\eqdef (0,-1,-4,3)\\
      \vec v_0'&\eqdef (0,6,-2,-2)\\
      \vec v_{(0,-1,t)}&\eqdef(-1,-4,3,-1)\\
      \vec v_{t}&\eqdef(0,-2,-2,6)\\
      \vec v_t'&\eqdef(0,3,-1,-4)\\
      \vec v_{(t,2,0)}&\eqdef (2,-2,6,-2)
    \end{align*}
    \begin{align*}
      (1,1,6,0)&\xrightarrow{\vec v_0\vec v_0'\vec v_{(0,-1,t)}}(0,2,3,0)\\
      &\xrightarrow{\vec v_t\vec v_t'\vec v_{(t,2,0)}}(2,1,6,0)
    \end{align*}
    \caption{\label{fig:vas}A 4-VAS simulating the VASS of \cref{fig:vass}.}
  \end{subfigure}
  \caption{\label{fig:examples}A Petri net, its representation as a
    VASS, and its representation as a VAS, along with a run in all
    three systems.}
\end{figure}

The \emph{coverability problem} for VASS takes as input a VASS~$\Vv$,
an initial configuration $\config{p}{u}$, and a target
configuration~$\config{q}{v}$, and asks whether there exists~$\vec
v'\sqsupseteq\vec v$ such that there is a run
$\run{\config{p}{u}}{\ast}{\config{q}{v'}}$.  We denote the
parameterisations of this decision problem by
\lang{p\textsf-dim\textsf-COVERABILITY(VASS)} and
\lang{p\textsf-size\textsf-COVERABILITY(VASS)}.

A VAS can therefore be seen as a VASS with a single state, which is
dropped entirely from the definition.  Conversely, by a well-known
result from the seventies by Hopcroft and Pansiot, one can simulate
the states of a VASS at the cost of three extra dimensions in a
VAS~\cite[Lemma~2.1]{HopcroftP79}.  For clarity, the obtained VAS has
an equivalent reachability relation between configurations; a
configuration $\config{q}{x}$ in the original VASS corresponds to
configurations of the form $(\vec x, a_q, b_q, c_q)$ in the VAS, where
$a_q$, $b_q$, and $c_q$ are values bounded by $(|Q|+1)^2$ that
represent the state $q$, and such that
$(a_p,b_p,c_p)\not\sqsubseteq(a_q,b_q,c_q)$ for $p\neq q$ throughout
the computation.  Each transition of the VASS is simulated by a
sequence of three transitions of the VAS, the first two checking the
current state and the last one performing the update.  See
\cref{fig:examples} for an illustration of this construction.
\begin{fact}[c.f.\ {\cite[Lemma~2.1]{HopcroftP79}}]\label{fct:vass-to-vas}
  The parameterised problems
  \lang{p\textsf-dim\textsf-COVERABILITY(VAS)} and
  \lang{p\textsf-dim\textsf-COVERABILITY(VASS)} are equivalent up to
  PL reductions, and so are the parameterised problems
  \lang{p\textsf-size\textsf-COVERABILITY(VAS)} and
  \lang{p\textsf-size\textsf-COVERABILITY(VASS)}, all this both with a unary
  and with a binary encoding.
\end{fact}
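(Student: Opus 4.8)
I would prove both equivalences by exhibiting PL reductions in each direction, with a single argument covering the unary and the binary encoding. One direction is essentially free: a $d$-dimensional VAS~$\vec V$ \emph{is} the single-state $d$-VASS $\Vv_{\vec V}\eqdef(\{\star\},\{(\star,\vec v,\star):\vec v\in\vec V\})$, and under the identification $\vec u\leftrightarrow\config{\star}{u}$ of configurations the step relation, and hence the covering relation, is preserved verbatim. So $(\langle\vec V,\vec s,\vec t\rangle,d)\mapsto(\langle\Vv_{\vec V},\config{\star}{s},\config{\star}{t}\rangle,d)$ is a dimension-preserving PL reduction from \lang{p\textsf-dim\textsf-COVERABILITY(VAS)} to \lang{p\textsf-dim\textsf-COVERABILITY(VASS)}, and since $\|\Vv_{\vec V}\|=1+\|\vec V\|$ the analogous map reduces \lang{p\textsf-size\textsf-COVERABILITY(VAS)} to \lang{p\textsf-size\textsf-COVERABILITY(VASS)}.

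For the converse I would invoke the Hopcroft--Pansiot simulation of \cite[Lemma~2.1]{HopcroftP79} recalled above and illustrated in \cref{fig:examples}. Given a $d$-VASS $\Vv=(Q,T)$ with initial configuration $\config{p}{u}$ and target $\config{q}{v}$, the reduction outputs the $(d+3)$-dimensional VAS $\vec V'$ of that construction together with the initial configuration $(\vec u,a_p,b_p,c_p)$ and the target $(\vec v,a_q,b_q,c_q)$, where $(a_r,b_r,c_r)$ is the triple with entries below $(|Q|+1)^2$ that encodes state~$r\in Q$. The forward implication is routine: a covering run $\run{\config{p}{u}}{\ast}{\config{q}{v'}}$ of $\Vv$ with $\vec v'\sqsupseteq\vec v$ is replayed, three VAS transitions per VASS transition, as a run $(\vec u,a_p,b_p,c_p)\to^{\ast}_{\vec V'}(\vec v',a_q,b_q,c_q)$, and $(\vec v',a_q,b_q,c_q)\sqsupseteq(\vec v,a_q,b_q,c_q)$. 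The converse implication is the crux, and the part I expect to be the main obstacle to write out with care: from a run of $\vec V'$ from $(\vec u,a_p,b_p,c_p)$ to some $\vec w\sqsupseteq(\vec v,a_q,b_q,c_q)$ one must deduce that it really simulates a run of $\Vv$ ending in state~$q$, rather than straying from the ``on-track'' configurations or halting in the middle of a transition gadget. Here the incomparability invariant $(a_r,b_r,c_r)\not\sqsubseteq(a_{r'},b_{r'},c_{r'})$ for $r\neq r'$, preserved on the last three coordinates throughout any run of $\vec V'$ and extended to the intermediate values seen while a single transition is simulated, is exactly what is needed: it forces $\vec w=(\vec v',a_q,b_q,c_q)$ with $\vec v'\sqsupseteq\vec v$, and stripping the gadgets then recovers a covering run of $\Vv$. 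This is precisely why it is cleaner to cite \cite[Lemma~2.1]{HopcroftP79} than to rebuild the gadget here.

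It then remains to verify the bounds demanded of a PL reduction. The entries $a_r,b_r,c_r<(|Q|+1)^2$ can be emitted by a logspace transducer maintaining a constant number of counters, and $\vec V'$ consists of $\Oh(|T|)$ vectors, each one of the original updates padded to $d+3$ coordinates by three entries of magnitude $\Oh(|Q|^2)$; hence the whole output is produced in logarithmic space and the map is a PL (so also an FPT) reduction. For the dimension parameterisation the new parameter is $d+3$, and condition~(3) holds at once. For the size parameterisation the new parameter is $\|\vec V'\|$; although $|T|$ need not be bounded in terms of $\|\Vv\|$, the number of \emph{distinct} vectors of $\vec V'$ is $\mathrm{poly}(\|\Vv\|)$ --- at most $\Oh(|Q|^2)$ state-handling vectors and at most one update-carrying vector per distinct transition of $\Vv$, of which there are $\Oh(|Q|^2\cdot\|\Vv\|)$ since the updates of $\Vv$ take at most $\|\Vv\|+1$ distinct values --- and each such vector has size $\mathrm{poly}(\|\Vv\|)$ in either encoding, so $\|\vec V'\|=\mathrm{poly}(\|\Vv\|)$ and condition~(3) holds with a polynomial (hence computable) function~$g$. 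Running this argument once with $\unarysize{\cdot}$ and once with $\binarysize{\cdot}$ establishes all four claimed equivalences.
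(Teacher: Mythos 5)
Your proposal matches the paper's treatment: the paper likewise disposes of one direction by viewing a VAS as a single-state VASS, and for the converse cites the Hopcroft--Pansiot simulation of \cite[Lemma~2.1]{HopcroftP79} with the incomparable state-encoding triples bounded by $(|Q|+1)^2$ and three VAS transitions per VASS transition, noting that the construction is logspace-computable and increases the dimension by~$3$ and the size polynomially. Your additional bookkeeping for condition~(3) of a PL reduction under the size parameterisation is correct and only spells out what the paper leaves implicit.
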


\subparagraph{Petri nets.}
\label{app:petri-nets}
A \emph{Petri net}~\cite{Petri62} is a
tuple $\?N=(P,T,W)$ where $P$ is a finite set of places, $T$ is a
finite set of transitions, and $W{:}\,(P\times T)\cup(T\times
P)\to\+N$ is a (weighted) flow function.  It defines a transition
system with configurations in $\+N^P$, i.e.\ multisets of places (aka
\emph{markings}) that can equivalently be seen as vectors
in~$\+N^{|P|}$, and steps $\vec m\xrightarrow{t}_{\?N}\vec m'$
whenever $\vec m(p)\geq W(p,t)$ and $\vec m'(p)=\vec
m(p)-W(p,t)+W(t,p)$ for all $p$ in $P$.  We write $\vec
m\to_{\?N}^\ast\vec m'$ if the marking~$\vec m'$ is reachable from the
marking~$\vec m$ by a finite sequence of such steps.  The
\emph{dimension} of a Petri net is $|P|$, its number of places; its
\emph{size} is $\|\?N\|\eqdef |P|+|T|+\sum_{p\in P,t\in
  T}(\norm{W(p,t)}+\norm{W(t,p)})$.  The \emph{coverability problem}
for Petri nets takes as input a Petri net~$\?N$, an initial marking
$\vec m\in\+N^P$, and a target marking~$\vec m'$, and asks
whether there exists~$\vec m''\sqsupseteq\vec m'$ such that $\vec
m\to^\ast_{\?N}\vec m''$.  We denote the parameterisations of this
decision problem by \lang{p\textsf-dim\textsf-COVERABILITY(PN)} and
\lang{p\textsf-size\textsf-COVERABILITY(PN)}.

See~\cref{fig:pn} for a depiction of a Petri net $(\{p\},\{t\},W)$
with $W(p,t)=1$ and $W(t,p)=2$ as a directed bipartite graph, with
places represented by circles, transitions by rectangles, and arcs
from places to transitions representing the input flow~$W(p,t)$ and
arcs from transitions to places representing the output
flows~$W(t,p)$; the absence of an arc denotes a flow of~$0$, and an
arc without weight denotes a flow of~$1$.

As is well-known, a Petri net~$(P,T,W)$ can be encoded as an
equivalent $|P|$-dimensional VASS with $|T|+1$ states (see~\cref{fig:vass}), and conversely a $d$-dimensional VAS~$\vec V$ can be
encoded as an equivalent Petri net with $d$ places and one transition
per vector $\vec v\in\vec V$, with input flow the absolute values of
the negative values in~$\vec v$ and output flow the positive values
in~$\vec v$.  Thus we have the following equivalence in terms of
parameterised problems.
\begin{fact}\label{fct:pn-to-vas}
  The problems
  \lang{p\textsf-dim\textsf-COVERABILITY(VAS)} and
  \lang{p\textsf-dim\textsf-COVERABILITY(PN)} are equivalent up to
  PL reductions, and so are the problems
  \lang{p\textsf-size\textsf-COVERABILITY(VAS)} and
  \lang{p\textsf-size\textsf-COVERABILITY(PN)}, all this both with a unary
  and with a binary encoding.
\end{fact}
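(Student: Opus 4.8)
The plan is to prove the two directions of each stated equivalence separately and then compose them, using that PL reductions are closed under composition and that \cref{fct:vass-to-vas} already relates the VAS and VASS variants; throughout, the constructions do not depend on the encoding, only the accounting of sizes does.

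For the reduction from \lang{p\textsf-dim\textsf-COVERABILITY(VAS)} (resp.\ \lang{p\textsf-size\textsf-COVERABILITY(VAS)}) to its Petri-net counterpart, I would use the direct encoding sketched above: from a $d$-dimensional VAS~$\vec V$ build the Petri net $\?N=(P,T,W)$ with $P=\{1,\dots,d\}$, one transition~$t_{\vec v}$ per vector $\vec v\in\vec V$, input flow $W(i,t_{\vec v})\eqdef\max(0,-\vec v[i])$ and output flow $W(t_{\vec v},i)\eqdef\max(0,\vec v[i])$, and keep the source and target configurations unchanged. The only thing to verify is that the step relations literally coincide: $t_{\vec v}$ is enabled at a marking~$\vec m$ and leads to~$\vec m'$ exactly when $\vec m[i]\geq-\vec v[i]$ for every $i$ with $\vec v[i]<0$, i.e.\ exactly when $\vec m+\vec v\in\+N^d$, in which case $\vec m'=\vec m+\vec v$; hence $\vec s$ covers~$\vec t$ in~$\vec V$ if and only if it does so in~$\?N$. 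This transformation is entirely local, hence computable in logarithmic space, the dimension is preserved exactly, and one checks $\unarysize{\?N}=\Oh(\unarysize{\vec V})$ and $\binarysize{\?N}=\binarysize{\vec V}^{\Oh(1)}$, so the relevant parameter grows only by a computable function, as a PL reduction demands.

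For the converse direction I would not try to encode a Petri net directly as a one-state VAS, but instead route through vector addition systems with states so that \cref{fct:vass-to-vas} closes the loop; it therefore suffices to reduce \lang{p\textsf-dim\textsf-COVERABILITY(PN)} (resp.\ \lang{p\textsf-size\textsf-COVERABILITY(PN)}) to the corresponding VASS problem. From $\?N=(P,T,W)$ I would build the VASS on counter set~$P$ with state set $Q\eqdef\{q_0\}\cup\{q_t:t\in T\}$ and, for each transition~$t$ with input-flow vector~$\vec v^-_t$ and output-flow vector~$\vec v^+_t$, the two VASS transitions $(q_0,-\vec v^-_t,q_t)$ and $(q_t,\vec v^+_t,q_0)$, and send a Petri-net coverability instance with source~$\vec m$ and target~$\vec m'$ to the VASS instance with source~$\config{q_0}{m}$ and target~$\config{q_0}{m'}$, keeping the parameter. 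For correctness, note that since the only transition leaving~$q_t$ returns to~$q_0$, every run of the VASS that starts in~$\config{q_0}{m}$ and ends in state~$q_0$ decomposes into matched consecutive pairs $(q_0,-\vec v^-_t,q_t)(q_t,\vec v^+_t,q_0)$; such a pair is firable from~$\config{q_0}{x}$ precisely when $\vec x\geq\vec v^-_t$, which is exactly the Petri-net enabledness condition for~$t$, and it then produces $q_0(\vec x-\vec v^-_t+\vec v^+_t)$. Hence covering runs of~$\?N$ correspond exactly to covering runs of the VASS ending in state~$q_0$, and since VASS coverability pins the target state down to~$q_0$, the auxiliary configurations sitting at states $q_t\neq q_0$ are never spurious witnesses. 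This map is again local and logspace, the dimension~$|P|$ is preserved, and $\norm{\Vv}$ is polynomially bounded in~$\norm{\?N}$ in both encodings.

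The point needing the most care---and the reason for routing through VASS rather than using a one-state VAS---is that the VAS step relation is strictly more permissive than Petri-net firing: adding $\vec v^+_t-\vec v^-_t$ merely requires the result to lie in~$\+N^d$, which may hold even when $\vec m$ does not dominate~$\vec v^-_t$, whenever~$\vec v^+_t$ replenishes the deficient coordinate, so a naive one-state simulation would create covering runs with no Petri-net analogue; the intermediate states~$q_t$ are exactly what forbids this. What remains is routine bookkeeping: checking that all four size parameters transform into one another up to computable (in fact polynomial) functions---the only mild subtlety in the binary encoding being that $d\cdot\log(\norm{\vec v}_\infty+1)$ and $\sum_i\log(|\vec v[i]|+1)$ can differ by a factor of at most $d\le\norm{\vec V}$---and that composing two PL reductions yields a PL reduction, which together give all four cases.
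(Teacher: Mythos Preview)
Your proposal is correct and follows essentially the same route as the paper: the paper's justification (the paragraph preceding the Fact) encodes a VAS directly as a Petri net with one place per counter and one transition per vector, and encodes a Petri net as a $|P|$-dimensional VASS with $|T|+1$ states, then implicitly relies on \cref{fct:vass-to-vas} to close the loop back to VAS. Your write-up simply spells out these constructions in more detail and makes explicit the reason for the detour through VASS (that a one-state VAS would be too permissive), which the paper leaves unsaid.
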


\subsection{The Complexity of Coverability}\label{sec:complexity}

\subparagraph{Upper Bounds.}
Regarding the complexity upper bounds, consider an input of the
coverability problem for a $d$-dimensional VAS~$\vec V$ with initial
configuration~$\vec s$ and target configuration~$\vec t$, encoded in
unary with size $n=\unarysize{\vec V}+\onenorm{\vec s}+\onenorm{\vec
  t}$.  The approach pioneered by Rackoff~\cite{Rackoff78} to
establish the \EXPSPACE\ upper bound is the following, and will also
be pertinent for the discussion in \cref{sec:conclusions}.  A finite
sequence of steps $\vec s\to^\ast_{\vec V}\vec t'$ for some $\vec
t'\sqsupseteq\vec t$ is a \emph{covering run} or \emph{coverability
witness}; its length is its number of steps.  What Rackoff showed is
that, if there is a covering run, then there exists one of length
bounded by~$n^{2^{\Oh(d\log d)}}$.  The best known (and actually optimal)
upper bound for this length is the following.
\begin{fact}[{\cite[Theorem 3.3]{KunnemannMSSW23}}]
    \label{thm:rackoff}
    Shortest covering witnesses for an instance of coverability
    encoded in unary have length bounded by~$n^{2^{\Oh(d)}}$.
\end{fact}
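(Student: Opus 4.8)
The plan is to prove this by the classical \emph{Rackoff iteration}, in the sharper form of K\"unnemann et al., i.e.\ Rackoff's recursion together with the refined bookkeeping that lowers the second exponent from $2^{\Oh(d\log d)}$ to $2^{\Oh(d)}$.

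Write $n=\unarysize{\vec V}+\onenorm{\vec s}+\onenorm{\vec t}$ and $m\eqdef\norm{\vec V}_\infty$, so $m\le n$ and $\norm{\vec t}_\infty\le n$; we may assume $m\ge 1$, as otherwise the VAS is trivial. For a set of coordinates $I\sset\{1,\dots,d\}$, call a run from $\vec s$ \emph{$I$-covering} if every coordinate in $I$ stays non-negative along it (coordinates outside $I$ are permitted to go negative) and its final configuration is $\ge\vec t$ on every coordinate in $I$. Since a $\{1,\dots,d\}$-covering run is exactly a covering witness, it suffices to establish, by induction on $i=\abs I$, a uniform upper bound $\ell(i)$ on the length of a shortest $I$-covering run, valid over all instances of dimension $d$ and all $I$ with $\abs I=i$ for which such a run exists; the base case $\ell(0)=1$ is immediate.

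For the inductive step, fix an instance and a set $I$ with $\abs I=i$, let $\rho=\vec u_0\to\cdots\to\vec u_k$ be a shortest $I$-covering run, and set $B\eqdef m\cdot(\ell(i-1)+1)+\norm{\vec t}_\infty$. If every coordinate in $I$ stays below $B$ throughout $\rho$, then the projections $(\vec u_j[i'])_{i'\in I}$ all lie in $\{0,\dots,B-1\}^i$, so as soon as $k\ge B^i$ two of them coincide and the loop in between can be excised to give a shorter $I$-covering run, a contradiction; hence $k<B^i$. Otherwise, let $j^\star$ be the first position at which some coordinate $i^\star\in I$ reaches value $\ge B$. Before $j^\star$ every $I$-coordinate is below $B$, and the corresponding projections are pairwise distinct by the same loop-excision argument, so $j^\star\le B^i$; meanwhile the suffix $\vec u_{j^\star}\to\cdots\to\vec u_k$ certifies that an $(I\setminus\{i^\star\})$-covering run exists from $\vec u_{j^\star}$, hence by induction there is one, say $\sigma$, of length at most $\ell(i-1)$. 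Along $\sigma$ the coordinate $i^\star$ decreases by at most $m\cdot\ell(i-1)<B-\norm{\vec t}_\infty$, so it never goes negative and ends at a value $\ge\vec t[i^\star]$; therefore the prefix $\vec u_0\to\cdots\to\vec u_{j^\star}$ followed by $\sigma$ is an $I$-covering run of length at most $B^i+\ell(i-1)$, and minimality gives $k\le B^i+\ell(i-1)$. This establishes the recurrence $\ell(i)\le B^i+\ell(i-1)$ with $B\le\Oh(n\cdot\ell(i-1))$.

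All the difficulty is in how one turns this recurrence into the stated bound. Taken literally it only yields $\ell(i)\approx(n\cdot\ell(i-1))^i$, so $\log\ell(d)=\Oh(d!\cdot\log n)$ and $\ell(d)\le n^{2^{\Oh(d\log d)}}$ --- Rackoff's original estimate, whose weakness is the exponent $i$ in the term $B^i$. Obtaining the claimed $n^{2^{\Oh(d)}}$ requires the more delicate analysis of~\cite[Theorem~3.3]{KunnemannMSSW23}, which reorganises the decomposition of $\rho$ and the choice of thresholds over the $d$ levels of the recursion so that it becomes of the shape $\ell(i)\le n^{\Oh(1)}\cdot\ell(i-1)^{\Oh(1)}$; a recurrence of that shape satisfies $\log\ell(i)\le\Oh(\log n)+\Oh(1)\cdot\log\ell(i-1)$, which unfolds to $\log\ell(d)\le 2^{\Oh(d)}\cdot\log n$, i.e.\ $\ell(d)\le n^{2^{\Oh(d)}}$. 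Making the count of still-tracked coordinates disappear from the exponent in this manner is the genuine technical core, and I would follow~\cite{KunnemannMSSW23} for it (the exponent is best possible: Lipton's VAS family~\cite{Lipton76} forces a matching $n^{2^{\Omega(d)}}$ lower bound).
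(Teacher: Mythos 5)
The paper does not prove this statement at all: it is imported verbatim as a Fact from \cite[Theorem~3.3]{KunnemannMSSW23}, so leaning on that citation is consistent with how the paper treats it. Your write-up, however, presents itself as a proof, and what it actually proves is only Rackoff's classical bound. The induction on $I$-covering runs, the loop-excision argument on $I$-projections, and the threshold $B=m\cdot(\ell(i-1)+1)+\norm{\vec t}_\infty$ are all fine (one small point: since you re-invoke the inductive bound from the intermediate configuration $\vec u_{j^\star}$, the quantity $\ell(i-1)$ must be uniform in the starting configuration; Rackoff's bound indeed depends only on $\vec V$, $\vec t$ and $d$, but your $n$ includes $\onenorm{\vec s}$, so the induction should be stated with a bound independent of the initial configuration). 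This yields $\ell(i)\le B^i+\ell(i-1)$ and hence $n^{2^{\Oh(d\log d)}}$, as you acknowledge yourself.

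The gap is the remaining step, which is the whole point of the statement. Your claim that the decomposition and thresholds can be reorganised so that the recurrence takes the shape $\ell(i)\le n^{\Oh(1)}\cdot\ell(i-1)^{\Oh(1)}$ is asserted, not derived: you give neither the modified decomposition nor any argument that removes the factor $i$ from the exponent of $\ell(i-1)$, and exactly that removal is the technical content of \cite[Theorem~3.3]{KunnemannMSSW23} --- your own recurrence places $\ell(i-1)$ under an exponent $i$ through the term $B^i$, and nothing in your text shows how to avoid this. So, as a self-contained argument, the stated bound $n^{2^{\Oh(d)}}$ is not established; what you have is a correct proof of the weaker $n^{2^{\Oh(d\log d)}}$ bound together with a citation for the claimed one. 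That is in line with the paper, which also relies purely on the citation, but it would be more honest to present it as such rather than to sketch an unverified shape for the improved recurrence as if it were the proof.
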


As a direct corollary, using algorithms that exhaustively explore the
space of bounded configurations, one can determine the complexity of
coverability in VAS.
\begin{corollary}[cf.~{\cite[Corollary 3.4]{KunnemannMSSW23}}]
    \label{cor:rackoff-algorithms}
  There exists both a deterministic $n^{2^{\Oh(d)}}$-time algorithm
  and a non-deterministic $2^{\Oh(d)} \cdot \log(n)$-space algorithm
  for coverability, where $n$ is the size of the input encoded in
  unary.
\end{corollary}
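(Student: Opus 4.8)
Both algorithms are read off directly from the length bound of \cref{thm:rackoff}. Write $L\eqdef n^{2^{\Oh(d)}}$ for that bound on the length of a shortest covering witness and put $B\eqdef n\cdot(L+1)$; since $n\ge d$ we have $\log B=2^{\Oh(d)}\cdot\log n$. The one structural fact I would establish first is that a covering run, if one exists, may be taken to stay inside the finite cube $C\eqdef\set{0,\dots,B}^d$. Indeed, along a covering run of length at most $L$ every configuration $\vec u$ satisfies $\norm{\vec u}_\infty\le\norm{\vec s}_\infty+L\cdot\max_{\vec v\in\vec V}\norm{\vec v}_\infty\le n+L\cdot n\le B$, because $\norm{\vec s}_\infty\le\onenorm{\vec s}\le n$ and every update has $\norm{\vec v}_\infty\le\unarysize{\vec V}\le n$. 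Moreover $\vec t\in C$ since $\norm{\vec t}_\infty\le\onenorm{\vec t}\le n\le B$. So, by \cref{thm:rackoff}, truncating the whole configuration space to $C$ loses no covering runs.

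For the non-deterministic $2^{\Oh(d)}\cdot\log n$-space algorithm, I would simply guess a covering run step by step. The algorithm keeps the current configuration $\vec u\in C$ (initialised to $\vec s$) together with a step counter, both stored in binary in $d\cdot\lceil\log(B+1)\rceil+\lceil\log(L+1)\rceil=2^{\Oh(d)}\cdot\log n$ bits; at each iteration it nondeterministically selects some $\vec v\in\vec V$, verifies that $\vec u+\vec v\in C$ (in particular that it is non-negative), sets $\vec u\gets\vec u+\vec v$, increments the counter, rejects if the counter exceeds $L$, and accepts as soon as $\vec u\sqsupseteq\vec t$. Soundness is clear, since any accepting branch spells out a genuine run of $\vec V$ from $\vec s$ to a configuration dominating $\vec t$; completeness follows from \cref{thm:rackoff} together with the cube bound, as a shortest covering witness is followed by some branch and accepted within $L$ steps.

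For the deterministic $n^{2^{\Oh(d)}}$-time algorithm, I would run an ordinary reachability search in the bounded configuration graph $G$ with vertex set $C$ and an edge $\vec u\to\vec u'$ whenever $\vec u'=\vec u+\vec v\in C$ for some $\vec v\in\vec V$. Each vertex has at most $\size{\vec V}=\Oh(n)$ out-neighbours (discarding the useless zero vectors of $\vec V$), and every arithmetic operation is on integers of bitsize $2^{\Oh(d)}\cdot\log n$, hence polynomially cheap. A breadth-first search from $\vec s$ then accepts iff some reachable vertex dominates $\vec t$, and its running time is polynomial in $\size{C}\cdot\Oh(n)=(B+1)^d\cdot\Oh(n)=n^{d\cdot 2^{\Oh(d)}}=n^{2^{\Oh(d)}}$. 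Correctness is once more by \cref{thm:rackoff}: $\vec s$ covers $\vec t$ in $\vec V$ iff it does so by a short run iff $\vec s$ reaches a $\vec t$-dominating vertex of $G$. (Alternatively, the time bound is immediate from the space bound via the textbook inclusion $\class{NSPACE}(s)\subseteq\class{DTIME}(2^{\Oh(s)})$ with $s=2^{\Oh(d)}\cdot\log n$.)

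I do not expect a genuine obstacle here: the corollary is a routine unwinding of \cref{thm:rackoff}. The two points I would take care over are (i) that a \emph{shortest} covering run is confined to a cube of side $n^{2^{\Oh(d)}}$ — which is exactly its length bound times the largest absolute update, making the truncation to $C$ sound — and (ii) that the step counter in the nondeterministic algorithm is wide enough to count up to $L$ and thereby forces every branch to halt, so that the procedure genuinely runs in non-deterministic space $2^{\Oh(d)}\cdot\log n$.
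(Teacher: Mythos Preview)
Your proposal is correct and matches the paper's approach: both derive the non-deterministic algorithm by guessing a run step by step while maintaining the current configuration and a step counter in binary, using \cref{thm:rackoff} to bound both the run length and the counter values by $n^{2^{\Oh(d)}}$, which gives the $2^{\Oh(d)}\cdot\log n$ space bound. The paper is terser about the deterministic algorithm (it just says ``exhaustively explore the space of bounded configurations''), so your explicit BFS on the truncated cube and the alternative $\class{NSPACE}(s)\subseteq\class{DTIME}(2^{\Oh(s)})$ remark are welcome elaborations rather than a different route.
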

In the multi-parameter analysis of the complexity, one usually focuses
on the deterministic algorithm, which yields that coverability can be
solved in pseudo-polynomial deterministic time when the dimension is
fixed. In the parameterised setting, this also yields that
\lang{p\textsf-dim\textsf-COVERABILITY(VAS)} with a unary encoding is in~\XP.

In this paper however, we are more interested in the other, non-deterministic algorithm,
which we repeat from~\cite[Corollary~3.4]{KunnemannMSSW23} for the
sake of completeness. The algorithm starts with the initial configuration and
iteratively guesses the consecutive transitions in a run while keeping track
of the current configuration and the total number of transitions
applied so far (all the numbers are encoded in binary). If during
this iteration, it encounters a configuration covering the target
configuration, then it accepts, and if the bound on the length of
$n^{2^{\Oh(d)}}$ provided by \cref{thm:rackoff} is exceeded without
encountering a covering configuration, it rejects.  The correctness of
the algorithm follows directly from \cref{thm:rackoff}.  As for the
space complexity, observe that every configuration encountered during
the run will have all the counters bounded by $n^{2^{\Oh(d)}}$, and
the same can be also said about the length counter.  Hence, the space
needed to store all the relevant information is bounded by $\Oh(d\cdot
\log n^{2^{\Oh(d)}})\leq 2^{\Oh(d)}\cdot \log n$.

\subparagraph{Lower Bounds.} Regarding the lower bounds,
Lipton~\cite{Lipton76} was the first to show the \EXPSPACE-hardness of
the reachability problem, with a proof that also applies to
coverability.  The idea of the construction is to define by induction
over~$d$ a family of VAS $\vec V_{\!d}$ of dimension~$\Oh(d)$ that builds
up counter values bounded by $2^{2^d}$, allowing to simulate zero
tests on those counters. This leads to a simulation of a $3$-counter machine
with counter values bounded by~$2^{2^d}$, thereby
proving \EXPSPACE-hardness.
By a slight modification of the construction, one also obtains the
following counterpart to \cref{thm:rackoff}.
\begin{fact}[c.f.\ {\cite[Theorem~3]{BozzelliG11}}]
    \label{fct:lipton}
    Shortest covering witnesses for an instance of coverability
    encoded in unary may have length as large as~$n^{2^{\Omega(d)}}$.
\end{fact}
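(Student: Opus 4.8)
The plan is to obtain such long shortest covering witnesses from Lipton's $\EXPSPACE$\nobreakdash-hardness construction~\cite{Lipton76}, instantiated so that the simulated machine is forced to run for a very long time, exactly as in the refinement of Bozzelli and Ganty~\cite[Theorem~3]{BozzelliG11}. Recall from the discussion of lower bounds above that, for every~$d$, Lipton's construction yields a VAS~$\vec V_{\!d}$ of dimension~$\Oh(d)$ together with an initial configuration~$\vec s_d$ and a target configuration~$\vec t_d$, all of total unary size $n=\poly(d)$, with the property that $\vec V_{\!d}$ simulates a $3$\nobreakdash-counter machine whose counters stay below $2^{2^d}$, in such a way that $\vec s_d$ covers~$\vec t_d$ exactly when that machine halts, and every covering run simulates the machine step by step (with each simulated zero test contributing many extra steps). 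The only freedom I need is in the choice of the simulated machine.

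First I would take the simulated machine to be the machine $M_d$ that starts with a counter holding the value~$2$ and performs $d$ successive squarings of that value before entering a distinguished halting state; squaring a counter holding~$v$ takes $\Theta(v^2)$ steps, so $M_d$ runs for $\Theta(2^{2^d})$ steps before halting, all of its counters stay within $\Oh(2^{2^d})$ (hence below $2^{2^{d+\Oh(1)}}$, which is accommodated by taking $\Oh(d)$ levels in Lipton's construction), and its control has size $\poly(d)$. Taking $\vec t_d$ to encode the covering of the halting configuration, every covering run of $\vec V_{\!d}$ witnesses a halting computation of $M_d$, and since each step of $M_d$ is simulated by at least one step of $\vec V_{\!d}$, every covering run has length at least $\Theta(2^{2^d})$.

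It then remains to compare this with $n=\poly(d)$. Since $\log_2 n=\Theta(\log d)$, for every constant $\gamma\in(0,1)$ we have $n^{2^{\gamma d}}=2^{\Theta(\log d)\cdot 2^{\gamma d}}\le 2^{2^d}$ for all sufficiently large~$d$, so the shortest covering witness has length at least $2^{2^d}\ge n^{2^{\Omega(d)}}$. To make the statement meaningful also when $d$ is small and $n$ large---thereby matching the $n^{2^{\Oh(d)}}$ upper bound of \cref{thm:rackoff} over the whole parameter range---I would start the iterated squaring from a base value~$B$ supplied in the initial configuration rather than from~$2$: the construction then reaches counter values up to $B^{2^d}$, the unary input size becomes $\poly(d)+\Oh(B)$, and choosing $B\ge\poly(d)$ yields $n=\Theta(B)$ together with a shortest covering witness of length at least $B^{2^d}\ge n^{2^{\Omega(d)}}$.

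The delicate point---and the reason one must go through Lipton's construction rather than some naive gadget that merely builds up a large counter---is to rule out a \emph{shortcut}: because coverability, unlike reachability, tolerates under-approximation, one has to be sure that a run that cheats on a simulated zero test, or that leaves the big counters below their intended values, cannot reach a configuration covering~$\vec t_d$ with fewer steps. This is precisely the invariant underlying the correctness of Lipton's simulation for coverability, already established in~\cite{Lipton76,BozzelliG11}: an imprecise zero test irreversibly loses tokens and thereby makes the distinguished halting state unreachable---hence uncoverable, since that state is recorded by a marker that no other configuration covers---so the only covering runs are the faithful ones, which run for $\Theta(2^{2^d})$ (resp.\ $\Theta(B^{2^d})$) steps. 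I would therefore present the argument as a direct application of~\cite[Theorem~3]{BozzelliG11}, verifying only that the machine~$M_d$ above has the claimed running time and $\poly(d)$ description size.
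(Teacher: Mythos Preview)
The paper does not actually prove this fact: it is stated with a citation to~\cite[Theorem~3]{BozzelliG11} and the one-line remark that it follows ``by a slight modification of [Lipton's] construction.'' Your proposal is precisely a reconstruction of that argument, and your first two paragraphs (the family with $n=\poly(d)$ and witness length $2^{2^d}$) already suffice for the fact as stated, so you are in full agreement with the paper's approach.

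One technical point about your optional $B$-refinement: if the iterated squaring up to $B^{2^d}$ is carried out by the \emph{simulated} $3$-counter machine~$M_d$, then the counters of~$M_d$ exceed the bound $2^{2^d}$ that Lipton's zero-test gadgets with $\Oh(d)$ levels can certify, and the simulation is no longer faithful. The way this is handled in~\cite{BozzelliG11} is to place the parameter~$B$ at the \emph{base} of Lipton's own squaring tower (supplying~$B$ in unary in the initial configuration), so that the gadgets themselves zero-test counters up to~$B^{2^d}$; the simulated machine then merely needs to count down, and the dimension remains~$\Oh(d)$. Since the fact as written is an existential statement about some family of instances and is already established by your first construction, this is a refinement rather than a gap in your argument.
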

Lipton's construction was exploited as part of a parameterised
reduction from \lang{p\textsf-CLIQUE} to
\lang{p\textsf-dim\textsf-COVERABILITY(VAS)}~\cite[Theorem~4.2]{KunnemannMSSW23},
showing the $\W[1]$-hardness of the~problem.

\subsection{Further Related Work}
\label{sec:related}
Watel, Weisser, and Barth~\cite{WatelWB17} investigate an optimisation
variant of the coverability problem in weighted Petri nets: each
transition has a non-negative weight in~$\+R_{\geq 0}$, and the goal
is to find a covering witness where the sum of weights is minimal.
The parameters they consider are $\max_{p\in P,t\in T}W(p,t)$,
$\max_{p\in P,t\in T}W(t,p)$ the maximal input and output flow weights,
$\|\vec m'\|$ the size of the target marking, and the number of steps
in the sought covering witness, none of which is comparable to our
parameterisations by dimension or size of the Petri net.

Praveen~\cite{Praveen13} defines a graph associated with a Petri
net~$\?N$ whose vertex set is~$P$ --- the set of places --- and edges $\{p,p'\}$
whenever there exists a transition~$t$ with $W(p,t)>0$
and~$W(t,p')>0$.  Praveen then shows that the coverability problem
parameterised by both the size of a vertex cover and by the maximal
flow weight $\max\{W(p,t),W(t,p)\mid p\in P,t\in T\}$ is in
\para\PSPACE~\cite[Theorem~4.5]{Praveen13}.  While this is in essence
a refinement of the parameterisation by the dimension~$|P|$ of the
Petri net (which bounds the size of a vertex cover), the added flow
weight parameter makes this result not directly comparable.

\section{Unary Encoding}
\label{sec:p-dim}
\label{sec:unary}

This section is dedicated to proving the following theorem pinpointing
the exact complexity of the coverability problem in vector addition
systems, when parameterised by the dimension and using a unary
encoding.  This is a significant improvement over the \XP\ upper bound
and \W[1]-hardness mentioned in~\cite{KunnemannMSSW23}.

\begin{theorem}\label{thm:unary-xnl-complete}
  \lang{p\textsf-dim\textsf-COVERABILITY(VAS)} is \XNL-complete under
  PL reductions when using a unary encoding.
\end{theorem}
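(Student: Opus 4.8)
The plan is to establish the two directions separately. For membership in $\XNL$, I would simply invoke the non-deterministic $2^{\Oh(d)}\cdot\log n$-space algorithm recalled in \cref{cor:rackoff-algorithms}: since $2^{\Oh(d)}$ is a computable function of the parameter $d$, and $n$ is the size of the unary-encoded input, this algorithm witnesses membership in $\XNL$ directly from the definition of that class. The only mild care needed is to note that all intermediate quantities (the current configuration and the step counter, both bounded by $n^{2^{\Oh(d)}}$) can be stored in $2^{\Oh(d)}\cdot\log n$ bits when written in binary, which is exactly the argument already spelled out after \cref{cor:rackoff-algorithms}. So this half is essentially a citation.

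For $\XNL$-hardness under PL reductions, the plan is to reduce from the \emph{bounded intersection non-emptiness problem for finite automata} --- given $k$ deterministic finite automata $\Dd_1,\dots,\Dd_k$ over a common alphabet, decide whether $\bigcap_i \recognises{\Dd_i}\neq\emptyset$, parameterised by $k$ --- which is $\XNL$-complete under PL reductions by Wehar's theorem. Given such an instance, I would build a VASS (and then invoke \cref{fct:vass-to-vas} to pass to a VAS) of dimension $\Oh(k)$ that simulates all $k$ automata in parallel: the construction guesses a common input word letter by letter, and for each automaton $\Dd_i$ it maintains, across a constant number of counters, a representation of the current state of $\Dd_i$, updating all $k$ of these representations on each guessed letter. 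The standard trick for tracking a pointer into a set of $m$ states using a bounded number of counters that still permits "zero-test-like" comparisons --- essentially the Hopcroft--Pansiot encoding of \cref{fct:vass-to-vas}, where a state $q$ is coded by values $(a_q,b_q,c_q)$ with $(a_p,b_p,c_p)\not\sqsubseteq(a_q,b_q,c_q)$ for $p\neq q$ --- lets the VASS verify transitions without genuine zero tests, at the cost of $\Oh(1)$ counters per automaton; the numbers involved are polynomial in $\sum_i|\Dd_i|$, hence unary-encodable within polynomial size. Acceptance of the VASS (covering a designated target) should correspond to all $k$ automata simultaneously reaching accepting states on the guessed word.

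The reduction must be computable in space $f(k)+\Oh(\log n)$, which is the delicate requirement: a PL reduction, not merely an FPT reduction. This forces the construction of the VASS/VAS to be performed in a streaming, logspace-uniform fashion --- each transition of the output system must be describable from a logspace-bounded scan of the input automata, and the dimension blow-up must stay $\Oh(k)$ (so that the output parameter is bounded by a computable function of $k$). I expect the main obstacle to be exactly this bookkeeping: verifying that the state-encoding gadget can be emitted in logarithmic space and that covering in the resulting VAS is equivalent to common non-emptiness, rather than something slightly weaker (e.g., ensuring that a run cannot "cheat" by desynchronising the $k$ simulated automata, which is where the $\sqsubseteq$-incomparability of the state codes does the real work). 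Once the gadget is in place, closing the argument is routine: apply \cref{fct:vass-to-vas} to turn the VASS into a VAS with the same complexity, observe that the unary encoding is polynomial in the input size, and conclude $\XNL$-completeness under PL reductions.
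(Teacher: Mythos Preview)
Your proposal is correct and follows essentially the same route as the paper: membership via \cref{cor:rackoff-algorithms}, and hardness by a PL reduction from Wehar's \lang{p\textsf-INTERSECTION\textsf-NONEMPTINESS(DFA)} to a VASS of dimension $\Oh(k)$, followed by \cref{fct:vass-to-vas}. The only cosmetic difference is the state-encoding gadget: the paper uses a cleaner two-counter scheme per automaton (store $\counter{x_i}=j$ and $\counter{y_i}=s-j$, so subtracting $(a,s-a)$ succeeds iff the current state index is exactly~$a$) rather than the three-counter Hopcroft--Pansiot codes you suggest, but either choice yields $\Oh(k)$ counters with unary-polynomial values and a logspace-emittable transition table.
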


Using the reduction from \cref{ex:fpred}, this shows that the same
upper bound also applies to the parameterisation by size.
\begin{corollary}\label{cor:unary-xnl}
  \lang{p\textsf-size\textsf-COVERABILITY(VAS)} is in \XNL\ when using
  a unary encoding.
\end{corollary}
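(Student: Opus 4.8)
The plan is to obtain \cref{cor:unary-xnl} as an immediate consequence of \cref{thm:unary-xnl-complete}, using only the reduction already recorded in \cref{ex:fpred} together with the fact that \XNL\ is closed under PL reductions. First, I would invoke \cref{ex:fpred}: the map sending $(\langle\vec V,\vec s,\vec t\rangle,\norm{\vec V})$ to $(\langle\vec V,\vec s,\vec t\rangle,d)$ is a PL reduction from \lang{p\textsf-size\textsf-COVERABILITY(VAS)} to \lang{p\textsf-dim\textsf-COVERABILITY(VAS)} in the unary encoding. The three defining conditions of a PL reduction hold essentially for free: condition~(1) because the input instance is literally unchanged (it is the same coverability question), condition~(2) because the map is the identity on the input and the value $d$ can be read off the encoding in logarithmic space, and condition~(3) because $d\le\norm{\vec V}$ in any non-trivial VAS, so the new parameter is bounded by the identity function of the old one.

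Second, I would quote \cref{thm:unary-xnl-complete}, by which \lang{p\textsf-dim\textsf-COVERABILITY(VAS)} with a unary encoding lies in \XNL. Composing the PL reduction above with an \XNL\ algorithm for this dimension-parameterised problem, and using the standard observation that \XNL\ is closed under PL reductions (this is precisely why PL, and not FPT, reductions are the appropriate notion for \XNL), yields an \XNL\ algorithm for \lang{p\textsf-size\textsf-COVERABILITY(VAS)}, which is the claim. One can also see the statement more directly, bypassing \cref{thm:unary-xnl-complete}: since $d\le\norm{\vec V}$, the non-deterministic $2^{\Oh(d)}\cdot\log n$-space procedure of \cref{cor:rackoff-algorithms} already runs in space at most $f(\norm{\vec V})\cdot\log n$ for the computable function $f(k)\eqdef 2^{\Oh(k)}$, which matches the definition of \XNL.

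I do not expect any genuine obstacle here: all the substance is inherited from \cref{thm:unary-xnl-complete} --- whose hard direction, the \XNL\ lower bound obtained via Wehar's intersection non-emptiness problem for finite automata, is the nontrivial ingredient --- and from the trivial size bound $\norm{\vec V}\ge d$. The only points warranting a line of verification are that the map of \cref{ex:fpred} meets all three conditions of a PL reduction and that \XNL\ is closed under such reductions (the usual logspace-composition argument, being mildly careful that the parameter blow-up $g$ is computable and that intermediate output bits are recomputed on demand rather than materialised), both of which are routine.
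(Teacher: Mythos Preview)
Your proposal is correct and follows exactly the paper's approach: the corollary is obtained directly from \cref{thm:unary-xnl-complete} via the PL reduction of \cref{ex:fpred}, with closure of \XNL\ under PL reductions doing the rest. One small remark: for this corollary you only need the \emph{membership} direction of \cref{thm:unary-xnl-complete}, not the \XNL-hardness via Wehar's result, so the ``nontrivial ingredient'' you mention is actually irrelevant here---the whole content is the $2^{\Oh(d)}\cdot\log n$ space bound of \cref{cor:rackoff-algorithms} together with $d\le\norm{\vec V}$.
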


The membership to \XNL\ follows from Rackoff's upper bound, that
is,~\cref{thm:rackoff}.  The \XNL\ lower bound, however, requires some
attention.  We prove it using a reduction from the intersection
non-emptiness problem for deterministic finite automata, which is
presented in~\cref{lem:xnl-reduction} below.  To state it, we first
need to establish terminology related to automata.

\subparagraph{Finite Automata.}
A \emph{deterministic finite automaton} (DFA) $\Dd = (\Sigma, Q,
T, \initialstate, F)$ over a finite alphabet $\Sigma$, consists of a
finite set~$Q$ of states, a set $T \sset Q \times \Sigma \times Q$ of
transitions, an initial state $\initialstate \in Q$, and a set
$F \sset Q$ of \emph{final states}, with the restriction that, for
every state~$p \in Q$ and letter~$\sigma \in \Sigma$, there is at most
one state~$q \in Q$ such that $(p, \sigma, q) \in T$.  %
A \emph{run}
for a word $w=\sigma_1\sigma_2\cdots\sigma_n$ is a sequence
$(q_0,\sigma_1,q_1),(q_1,\sigma_2,q_2)\cdots(q_{n-1},\sigma_n,q_n)$ of
transitions in~$T$;
it is \emph{accepting} if $q_n\in F$ is final.  The \emph{language}
$L(\?D)\subseteq\Sigma^\ast$ recognised by~$\?D$ is the set of words
for which there exists some accepting run.  We shall assume that a DFA
is specified by its adjacency matrix, and therefore that
the \emph{size} of a given DFA $\Dd = (\Sigma, Q, T, \initialstate,
F)$ is $\size{\Dd}\eqdef\size{Q} \cdot \size{\Sigma}$.

Our interest for DFAs stems from their intersection non-emptiness
problem, a well-known \PSPACE-complete problem, which when
parameterised by the number of DFAs is one the very few
known \XNL-complete problems in the literature~\cite[Corollary
3.5]{Wehar16}.
\begin{problem}[\lang{p\textsf-INTERSECTION\textsf-NONEMPTINESS(DFA)}]\hfill
\begin{description}
\item[\hspace{1.15em}input] $k$ DFAs $\?D_1$, \dots, $\?D_k$ over the
same alphabet~$\Sigma$
\item[\hspace{1.15em}parameter] $k$
\item[\hspace{1.15em}question] is $\bigcap_{1\leq i\leq
k}L(\?D_i)\neq\emptyset$, i.e., does there exist a word
$w\in\Sigma^\ast$ with an accepting run in each~$\?D_i$?
\end{description}
\end{problem}

\subparagraph{Reduction to Coverability.}  The lower bound
in \cref{thm:unary-xnl-complete} follows from the PL reduction
from \lang{p\textsf-INTERSECTION\textsf-NONEMPTINESS(DFA)}
to \lang{p\textsf-dim\textsf-COVERABILITY(VASS)} captured by the
following statement.

\begin{lemma}\label{lem:xnl-reduction}
    Let $\Dd_1, \ldots, \Dd_k$ be DFAs over a common alphabet~$\Sigma$.
    One can compute a $2k$-VASS $\Vv$, an initial configuration
    $\config{p}{u}$, and a target configuration $\config{q}{v}$ in
    space $\Oh(\log(k) + \log(|\Dd_1| + \ldots + |\Dd_k|))$ such that $\bigcap_{1\leq i\leq k}L(\?D_i)\neq\emptyset$ if and only if there exists $\vec{v'} \geq \vec{v}$ and a run from $\config{p}{u}$ to $\config{q}{v'}$ in $\Vv$.
\end{lemma}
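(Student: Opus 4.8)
The goal is to simulate, with a $2k$-dimensional VASS, the product automaton $\Dd_1 \times \cdots \times \Dd_k$, using two counters per DFA to encode which state that DFA is currently in. The classic obstacle is that a VASS (or VAS) cannot test a counter for zero, yet tracking a state from a set $Q_i$ of size $|Q_i|$ really does need the ability to "know" exactly which of the $|Q_i|$ values a counter holds. The standard trick — going back to the Hopcroft–Pansiot state-simulation and reused in many coverability lower bounds — is to use a complementary pair of counters $(a_i, b_i)$ with the invariant $a_i + b_i = c_i$ for a fixed constant $c_i$ (one can take $c_i = |Q_i| - 1$, or $|Q_i|$ to be safe), so that a transition that wants to move from state $j$ to state $j'$ in $\Dd_i$ performs the update $a_i \mathrel{-}= j$, $b_i \mathrel{-}= (c_i - j)$, then $a_i \mathrel{+}= j'$, $b_i \mathrel{+}= (c_i - j')$. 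If the current state is not $j$, one of the two decrements drives a counter negative and the step is blocked; since the VASS step relation already forbids negative counters, this gives an exact, non-blockable-by-cheating "equality check". Crucially, $j, j' \leq |Q_i|$, so in a \emph{unary} encoding every update vector has norm $\Oh(\sum_i |Q_i|)$, keeping the instance size polynomial and — more importantly for the lemma — computable in logarithmic space.

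\textbf{Steps, in order.} First, fix the $2k$ counters, indexed $a_1, b_1, \ldots, a_k, b_k$, with the semantic invariant $a_i + b_i = c_i$ throughout, where $c_i \eqdef |Q_i|$ and the value $a_i$ encodes (an integer code of) the current state of $\Dd_i$; we fix an arbitrary bijection between $Q_i$ and $\{1, \ldots, |Q_i|\}$ sending $\initialstate_i$ to, say, $1$. Second, build the state set of $\Vv$: it has one "main" state $\mathsf{read}$ plus, for each letter $\sigma \in \Sigma$, a short chain of $k$ intermediate states $r^\sigma_1, \ldots, r^\sigma_k$ used to process the $k$ DFAs one at a time for that letter; so $|Q(\Vv)| = \Oh(|\Sigma| \cdot k)$. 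Third, wire the transitions: from $\mathsf{read}$ on a "choose letter $\sigma$" move go to $r^\sigma_1$ with the zero update; from $r^\sigma_{i}$ to $r^\sigma_{i+1}$ (and from $r^\sigma_k$ back to $\mathsf{read}$), for each transition $(j, \sigma, j') \in T_i$ of $\Dd_i$, add a VASS transition whose update is the paired decrement/increment described above on coordinates $(a_i, b_i)$ and zero elsewhere. This faithfully simulates one synchronous step of the product on letter $\sigma$: the run can pass through the whole chain $r^\sigma_1 \to \cdots \to r^\sigma_k \to \mathsf{read}$ if and only if each $\Dd_i$ has a $\sigma$-transition out of its current state, and afterwards the counters encode the new tuple of states. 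Fourth, set the initial configuration to $\mathsf{read}$ with $a_i = 1$, $b_i = |Q_i| - 1$ for all $i$ (encoding the tuple of initial states). Fifth, handle acceptance: add a final chain out of $\mathsf{read}$ (states $f_1, \ldots, f_k, \mathsf{acc}$) where the transition from $f_{i-1}$ to $f_i$ guesses a final state $j \in F_i$ of $\Dd_i$ and subtracts $j$ from $a_i$ and $|Q_i| - j$ from $b_i$; take the target configuration to be $\mathsf{acc}$ with the all-zeros vector, so covering $\vec v$ just means "reached $\mathsf{acc}$" — reachable exactly when, at some point, the tuple of current states lies in $F_1 \times \cdots \times F_k$. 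Sixth, argue correctness in both directions: a common accepted word $w$ yields a run (process the letters of $w$ one at a time through the $r^\sigma$-chains, then take the final chain), and conversely any run reaching $\mathsf{acc}$ reads off, from the sequence of chosen letters, a word in $\bigcap_i L(\Dd_i)$ — here one invokes the invariant $a_i + b_i = |Q_i|$, which is preserved by every update, to see that the paired-counter gadget really enforced "current state equals $j$" at each step, so no spurious acceptance is possible. Seventh, check the resource bound: the VASS has $\Oh(k|\Sigma|)$ states and $\Oh(\sum_i |T_i|) = \Oh(\sum_i |Q_i| \cdot |\Sigma|)$ transitions, each with an update vector of unary size $\Oh(\sum_i |Q_i|)$, and the whole object is produced by a straightforward triple loop (over $i$, over $\sigma$, over transitions of $\Dd_i$) that needs only a constant number of counters ranging up to $k$, $|\Sigma|$, and $\max_i |Q_i|$ — hence working space $\Oh(\log k + \log(\sum_i |\Dd_i|))$ as claimed, and the parameter of the output, namely $2k$, is bounded by a computable function of $k$.

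\textbf{The main obstacle.} The only genuinely delicate point is the soundness direction together with the invariant bookkeeping: one must be careful that the paired-counter trick cannot be "fooled". Because a VASS transition with update $-j$ on $a_i$ is enabled from value $v$ precisely when $v \geq j$ (not $v = j$), a single decrement alone would only test $\geq$, not $=$; it is exactly the simultaneity with the decrement $-( |Q_i| - j)$ on $b_i$, under the preserved equality $a_i + b_i = |Q_i|$, that upgrades this to an equality test ($a_i \geq j$ and $b_i \geq |Q_i| - j$ and $a_i + b_i = |Q_i|$ force $a_i = j$). I would state this as an explicit invariant lemma ("after every full pass through an $r^\sigma$-chain from a configuration encoding state-tuple $\vec q$, the reached configuration is either blocked or at $\mathsf{read}$ encoding $\delta(\vec q, \sigma)$") and prove it by induction on the number of letters processed; everything else is routine. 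A secondary, purely cosmetic, nuisance is settling on the precise integer codes and the exact constant $c_i$ so that all intermediate counter values stay in $\NN$ — taking $c_i = |Q_i|$ and codes in $\{1, \ldots, |Q_i|\}$ makes every increment/decrement land in $[0, |Q_i|]$, so no underflow or overflow issue arises. Finally, it is worth noting that one could equivalently phrase this directly as a VAS using \cref{fct:vass-to-vas}, but stating it as a VASS, as in the lemma, keeps the construction transparent.
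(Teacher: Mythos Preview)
Your approach is the same as the paper's---paired counters summing to a constant to encode each automaton's state, a section per letter that updates all $k$ automata in turn, and a final checking chain---but there is a genuine gap in the wiring step. You add, for each DFA transition $(j,\sigma,j')$, \emph{a single} VASS transition from $r^\sigma_i$ to $r^\sigma_{i+1}$ carrying ``the paired decrement/increment''. In a VASS, however, a transition carries one vector in $\ZZ^d$ and only the \emph{net} effect matters for enabledness; combining decrement and increment into one step gives the update $(j'-j,\,j-j')$ on $(a_i,b_i)$, which is fireable whenever $a_i\geq j-j'$ and $b_i\geq j'-j$. Under your invariant this does \emph{not} force $a_i=j$. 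Concretely: take $k=1$, states $\{1,2,3,4\}$, alphabet $\{a\}$, transitions $(1,a,2)$ and $(3,a,4)$, final set $\{4\}$; the language is empty, yet both net updates equal $(+1,-1)$ and from the initial encoding $(1,3)$ your VASS can walk to $(4,0)$ and pass the checking chain.

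Your obstacle paragraph already contains the fix: the simultaneous decrement $(-j,-(c_i-j))$ \emph{alone} does force $a_i=j$ under the invariant, but only if it is a transition by itself, not folded together with the increment. The paper makes this explicit by using \emph{two} VASS transitions per DFA transition, via an intermediate state~$q_t$: first subtract $(j,c_i-j)$ (fireable iff $a_i=j$, after which both counters are~$0$), then add $(j',c_i-j')$. This costs one extra intermediate state per DFA transition, so the state count becomes $\Oh\big(\sum_i |T_i|\big)$ rather than $\Oh(k|\Sigma|)$; this is harmless for the dimension $2k$ and for the logspace bound. With that change your argument goes through and coincides with the paper's proof.
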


\begin{proof}
    Suppose $\Dd_i = (\Sigma, Q_i, T_i, \initialstate_i, F_i)$ for every $1 \leq i \leq k$.
    Without loss of generality, we shall assume that $s = \size{Q_1} = \ldots = \size{Q_k}$.
    It will also be convenient for us to number the states; $Q_i = \set{q_{i,1}, \ldots, q_{i,s}}$ and we may also assume without loss of generality that $\initialstate_i = q_{i,1}$.
    
    \proofsubparagraph{Overview.}  We construct a VASS $\Vv$ with $2k$ counters that come in $k$ pairs: for each $i\in \{1,\ldots,k\}$, there are counters $\counter{x_i}$ and $\counter{y_i}$ used to store the current state of automaton~$\Dd_i$.  Precisely, that the current state of $\Dd_i$ is $q_{i,j}$, for some $1\leq j\leq s$, is reflected by counter values $\counter{x_i} = j$ and $\counter{y_i} = s-j$. The construction of $\Vv$ ensures that, except for some intermediate configurations, the following invariant is maintained: for each $i\in \{1,\ldots,k\}$, we have $\counter{x_i}+\counter{y_i}=s$ and $\counter{x_i}\geq 1$.
    
    With this way of encoding of a selection of states of
    $\Dd_1,\ldots,\Dd_k$, we emulate applying a transition
    of any $\Dd_i$ through a pair of transitions in $\Vv$ as follows. Suppose we
    would like to apply a transition of $\Dd_i$ that goes from $q_{i,a}$ to $q_{i,b}$, for some $a,b\in \{1,\ldots,s\}$: then, we may apply two transitions in $\Vv$
    \begin{itemize}
    	\item first, one that subtracts $a$ from $\counter{x_i}$ and $s-a$ from $\counter{y_i}$, and
    	\item then, one that adds $b$ to $x_i$ and $s-b$ to $\counter{y_i}$. 
    \end{itemize}
	Thanks to the invariant, the first transition can be fired only if we indeed have $\counter{x_i}=a$ and $\counter{y_i}=s-a$ (and the state encoded by those counters is $q_{i,a}$), and firing it brings both counters to $0$. Then the second transition sets $\counter{x_i}=b$ and $\counter{y_i}=s-b$. Thus, firing the two transitions in $\Vv$ both verifies that the current state is $q_{i,a}$, and updates this state to~$q_{i,b}$.
	
	With this basic gadget understood, we may explain the overall construction of $\Vv$, depicted in \cref{fig:vass-structure}. We create two special states: the starting state $p$ and the ending state $q$. The initial configuration is $\config{p}{u}$ where $\vec{u}\eqdef(1,s-1,\ldots,1,s-1)$, which corresponds to setting every automaton $\Dd_i$ to its initial state. Next, for each $\sigma\in \Sigma$ we construct a {\em{$\sigma$-section}}, which is essentially a loop that goes from $p$ back to $p$, and a {\em{checking section}}, which is essentially a path from $p$ to $q$. Thus, any run from $p$ to $q$ first loops some number of times through the $\sigma$-sections, and finally proceed through the checking section to $q$. The intention is that the effect of performing a loop through the $\sigma$-section, for some $\sigma\in \Sigma$, is an update of the values of the counters of $\Vv$ that corresponds to performing a $\sigma$-transition in each of the automata $\Dd_1,\ldots,\Dd_k$. This is done within $k$ consecutive parts of the $\sigma$-section, each responsible for updating the state of a different $\Dd_i$ using the gadgets described in the previous paragraph. Thus, choosing consecutive loops through the $\sigma$-sections corresponds to choosing  consecutive letters of a word $w$ over $\Sigma$, and simulating the runs of all the automata $\Dd_1,\ldots,\Dd_k$ on $w$. Finally, we construct the checking section using a similar method so that it can be traversed if and only if each automaton $\Dd_i$ is in a final state; this verifies that $w$ is in the intersection of the languages of $\Dd_1,\ldots,\Dd_k$. Again, the checking section consist of $k$ parts, each responsible for checking the state of a different $\Dd_i$.  See \cref{app:xnl-reduction-example} for an illustration of this construction.

    \begin{figure} 
        \centering
        \begin{tikzpicture}
	\node[sigma-part, fill = pastelamber] (p11) at (3,0) {\shortstack{{\small\bf Part 1}\\{}\\
	{\footnotesize$\sigma_1$-transitions}\\{\footnotesize of $\Dd_1$}}};
	\node[sigma-part, fill = pastelteal] (p12) at (5.5,0) {\shortstack{{\small\bf Part 2}\\{}\\
	{\footnotesize$\sigma_1$-transitions}\\{\footnotesize of $\Dd_2$}}};
	\node[sigma-part, fill = pastelrose] (p1k) at (9,0)
             {\shortstack{{\small\bf Part {\boldmath $k$}}\\{}\\
	{\footnotesize$\sigma_1$-transitions}\\{\footnotesize of $\Dd_k$}}};
	\node (p1d) at (7.25, 0) {$\cdots$};
	\draw[transition] (p11) -- (p12);
	\draw[transition] (p12) -- (p1d);
	\draw[transition] (p1d) -- (p1k);

	\node (l11) at (10.3, -0.5) {};
	\node (l12) at (5, -1) {};
	\draw[headless-transition, rounded corners] (p1k.east) -| (l11.center);
	\draw[headless-transition, rounded corners] (l11.center) |- (l12.center);

	\node[sigma-part, fill = pastelamber] (pl1) at (3,-2.5) {\shortstack{{\small\bf Part 1}\\{}\\
	{\footnotesize$\sigma_\ell$-transitions}\\{\footnotesize of $\Dd_1$}}};
	\node[sigma-part, fill = pastelteal] (pl2) at (5.5,-2.5) {\shortstack{{\small\bf Part 2}\\{}\\
	{\footnotesize$\sigma_\ell$-transitions}\\{\footnotesize of $\Dd_2$}}};
	\node[sigma-part, fill = pastelrose] (plk) at (9,-2.5)
             {\shortstack{{\small\bf Part {\boldmath $k$}}\\{}\\
	{\footnotesize$\sigma_\ell$-transitions}\\{\footnotesize of $\Dd_k$}}};
	\node (pld) at (7.25, -2.5) {$\cdots$};
	\draw[transition] (pl1) -- (pl2);
	\draw[transition] (pl2) -- (pld);
	\draw[transition] (pld) -- (plk);

	\node (l21) at (10.3, -3) {};
	\node (l22) at (5, -3.5) {};
	\draw[headless-transition, rounded corners] (plk.east) -| (l21.center);
	\draw[headless-transition, rounded corners] (l21.center) |- (l22.center);

	\node[rotate=90] at (3, -1.35) {$\cdots$};
	\node[rotate=90] at (5.5, -1.35) {$\cdots$};
	\node[rotate=90] at (9, -1.35) {$\cdots$};

	\node[checking-part, fill = pastelamber] (pc1) at (3,-4.5) {	\shortstack{{\small\bf Part 1}\\{}\\{\footnotesize Checking $F_1$}}};
	\node[checking-part, fill = pastelteal] (pc2) at (5.5,-4.5) {	\shortstack{{\small\bf Part 2}\\{}\\{\footnotesize Checking $F_2$}}};
	\node[checking-part, fill = pastelrose] (pck) at (9,-4.5) {
          \shortstack{{\small\bf Part {\boldmath $k$}}\\{}\\{\footnotesize Checking $F_k$}}};
	\node (pcd) at (7.25, -4.5) {$\cdots$};
	\draw[transition] (pc1) -- (pc2);
	\draw[transition] (pc2) -- (pcd);
	\draw[transition] (pcd) -- (pck);

	\node[state] (p) at (0.5, -2.5) {\large$p$};
	\draw[transition, rounded corners] (p.north) |- (p11.west) {};
	\draw[transition] (p) -- (pl1.west);
	\draw[transition, rounded corners] (p.south) |- (pc1.west) {};
	\draw[transition] (l12.center) to [rounded corners = 1mm] (2,-1) -- (p); 
	\draw[transition] (l22.center) to [rounded corners = 1mm] (1.5, -3.5) -- (p); 

	\node[state] (q) at (11, -4.5) {\large$q$};
	\draw[transition] (pck.east) -- (q);

	\draw[decoration={calligraphic brace, amplitude=5pt, mirror}, decorate, line width=1.5pt] (11.7,-0.7) -- (11.7,0.7);
	\node[align = left] at (12.75, 0) {$\sigma_1$-section};
	\draw[decoration={calligraphic brace, amplitude=5pt, mirror}, decorate, line width=1.5pt] (11.7,-3.2) -- (11.7,-1.8);
	\node[align = left] at (12.75, -2.5) {$\sigma_\ell$-section};
	\draw[decoration={calligraphic brace, amplitude=5pt, mirror}, decorate, line width=1.5pt] (11.75,-5.1) -- (11.75,-3.9);
	\node[align = left] at (12.7, -4.5) {\shortstack{checking\\section}};
\end{tikzpicture}
        \caption{
            Overall structure of $\Vv$ in the proof of~\cref{lem:xnl-reduction}.
        }
        \label{fig:vass-structure}
    \end{figure}

    \proofsubparagraph{Formal Construction.}  We start by constructing the two special states $p$ and $q$. 	
	Next, for every $\sigma\in \Sigma$ we construct the
    $\sigma$-section. It contains one part for each
    $i \in \set{1, \ldots, k}$ defined as follows.
    Let $m_{\sigma,i}$ be the number of $\sigma$-transitions in $\Dd_i$.
    Then we introduce $m_{\sigma,i}+2$ many states: one starting state, one ending state, and one intermediate state for each $\sigma$-transition in~$\Dd_i$.
    Furthermore, %
    for every $\sigma$-transition $t = (q_{i,a}, \sigma, q_{i,b})$ in
    $\Dd_i$, we construct two transitions in~$\Vv$: the first
    transition from the starting state to the intermediate state $q_t$
    that updates the counters by subtracting $a$ from $\counter{x_i}$
    and $s-a$ from $\counter{y_i}$ (and does not update the other
    counters); the second transition from the intermediate state $q_t$
    to the ending state that updates the counters similarly by adding
    $b$ to $\counter{x_i}$ and $s-b$ to $\counter{y_i}$ (and does not
    update the other~counters). Finally, we connect all the parts of
    the $\sigma$-section into a path as follows by adding for each
    $i\in \{1,\ldots,k-1\}$, a counter-neutral (not changing the
    values of the counters) transition from the ending state of the
    $i$-th part to the starting state of the $(i+1)$-st part.  The
    section is connected to the rest of the VASS by counter-neutral transitions from $p$ to the starting state of the first part, and from the ending state of the last part to $p$.%
    
    Next, we present the construction of the checking section, which
    amounts to constructing the $i$-th part, for each
    $i \in \set{1, \ldots, k}$, each with two states: one starting
    state and one ending state.
    Further, we construct $\size{F_i}$ many transitions: one for each final state of $\Dd_i$.
    Precisely, for each final state $q_{i,f}$ of $\Dd_i$ (for some $f \in \set{1, \ldots, s}$), we construct a transition in $\Vv$ from the starting state to the ending state that subtracts $f$ from $\counter{x_i}$ and $s-f$ from $\counter{y_i}$ (and does not update the other counters).
    It is important to note that this is only possible if $\counter{x_i} = f$ and $\counter{y_i} = s-f$ (i.e., when $\Dd_i$ is currently at $f$).
    Accordingly, the only possible way to go is through the $i$-th part
    of the checking section is if %
    $\Dd_i$ is currently in a final state. We connect again all the parts of the checking section into a path by adding counter-neutral transitions from $p$ to the starting state of the first part, from the ending state of the last part to $q$, and from the ending state of the $i$-th part to the starting state of the $(i+1)$-st part, for each $i\in \{1,\ldots,k-1\}$.
    
    Finally, we set the initial configuration of $\Vv$ to $\config{p}{u}$, where $\vec{u}$ sets $\counter{x_i}=1$ and $\counter{y_i}=s-1$ for every $i\in \{1,\ldots,k\}$, and the target configuration to $\config{q}{0}$ (so $\vec{v}=\vec{0}$). The following claim verifies the correctness of the reduction.
    
    \begin{restatable}{claim}{correctness}\label{cl:correctness}
    	The following conditions are equivalent.
    	\begin{itemize}
    		\item There exists a word over $\Sigma$ that is accepted by all the automata $\Dd_1, \ldots, \Dd_k$.
    		\item There is run of $\Vv$ from configuration $\config{p}{u}$ to some configuration with state $q$. 
    	\end{itemize}
    \end{restatable}

	\medskip

	\noindent The proof of \cref{cl:correctness} follows easily from the construction along the lines sketched at the beginning of the proof; we give a more formal explanation in \cref{app:xnl-reduction-proof}.
    \proofsubparagraph{Space Complexity.} Let us finally observe that the VASS~$\Vv$ can be constructed in space
    $\Oh\big(\log (k)+ \log (|\Dd_1| + \ldots + |\Dd_k|)\big)$, because the
    states and transitions of~$\Vv$ can be computed using three pointers
    \begin{itemize}
        \item $i \in \set{1, \ldots, k}$ for the current automaton $\Dd_i$,
        \item $\sigma \in \Sigma$ for the current letter, and
        \item $q_{i,a} \in Q_i$, which together with $\sigma$ determines the transition $(q_{i,a}, \sigma, q_{i,b})$ in the deterministic automaton $\Dd_i$.
    \end{itemize}
    This sums up to $\Oh\big(\log(k \cdot (\abs{\Sigma} \cdot \abs{Q_1}
    + \ldots + \abs{\Sigma} \cdot \abs{Q_k})\big)$ space for the
    reduction; as furthermore $\abs{\Dd_i}
    = \abs{\Sigma}\cdot\abs{Q_i}$ for each~$i$, this is indeed the same as
    $\Oh\big(\log(k) + \log(\abs{\Dd_1} + \ldots + \abs{\Dd_k})\big)$ space.
\end{proof}

We may now conclude this section by proving \cref{thm:unary-xnl-complete}.

\begin{proof}[Proof of~\cref{thm:unary-xnl-complete}]
  Membership in \XNL\ follows from the algorithm in nondeterministic
  space $2^{\Oh(d)}\cdot\log n$ from \cref{cor:rackoff-algorithms}
  (c.f.\ \cite[Corollary~3.4]{KunnemannMSSW23}).  The \XNL-hardness
  follows from \cref{fct:vass-to-vas} and the PL reduction described
  by \cref{lem:xnl-reduction} along with the \XNL-hardness of
  \lang{p\textsf-INTERSECTION\textsf-NONEMPTINESS(DFA)} proven by
  Wehar~{\cite[Corollary 3.5]{Wehar16}}.
\end{proof}

\subsection{Proof of
  \texorpdfstring{\cref{cl:correctness}}{Claim~3.4}} We provide here a
formal detailed proof of the correctness of the reduction; that is, we
prove \cref{cl:correctness}.
\label{app:xnl-reduction-proof}

\begin{claimproof} We verify the two implications in order.
	
	\proofsubparagraph{First direction.} 
	We  first argue that if there is a word in accepted all the automata $\Dd_1,\ldots,\Dd_k$, then there exists a run in $\Vv$ from $\config{p}{u}$ to $\config{q}{v}$ for some $\vec{v} \geq \vec{0}$.
	In fact, we will show that there is a run from $\config{p}{u}$ to $\config{q}{0}$.

	Suppose $w \in \Sigma^*$ is a word that is accepted by $\Dd_i$ for every $i \in \set{1, \ldots, k}$.
	We shall use vector-indexing notation ($w[1], w[2], \ldots$) to refer to the letters that comprise $w$. We construct a run of~$\Vv$ starting from $\config{p}{u}$ as follows. First, for each consecutive $j \in \set{1, \ldots, \size{w}}$, the run proceed through the $w[j]$-section as follows.
	Suppose that after the first $j-1$ letters have been read, the current state of $\Dd_i$ is $q_{i,a}$ (for some $1 \leq a \leq s$).
	Since $w$ is accepted by every automaton $\Dd_i$, there is some $w[j]$-transition in $\Dd_i$ that is taken; suppose it is the transition $t_i = (q_{i,a}, w[j], q_{i,b})$.
	Then when the run traverses the $w[j]$-section from $p$ back to $p$ by using the pair of transitions that go through the intermediate state $q_{t_i}$, for each $i\in \{1,\ldots,k\}$.
	It is possible to indeed take these transitions because if $\Dd_i$ was in state $q_{i,a}$, then $\counter{x_i} = a$ and $\counter{y_i} = s-a$.
	Then, after $t_i$ is taken in $\Dd_i$, the current state becomes $q_{i,b}$ which is reflected in the run in $\Vv$ since, after the second transition in the pair is taken, $\counter{x_i} = b$ and $\counter{y_i} = s-b$ is achieved.

	Lastly, once every letter $w[1], \ldots, w[\size{w}]$ has been read, we know that the current state of $\Dd_i$ is a final state, for every $i$.
	Precisely, suppose that the current state of each automaton $\Dd_i$ is $q_{i,f_i}$, for some $1 \leq f_i \leq s$.
	Thus, in $\Vv$, the configuration $\config{p}{z}$ that is reached has the vector of counter values $\vec{z}$ for which $\counter{x_i} = f_i$ and $\counter{x_i} = s-f_i$.
	This means that in each of the parts of the checking section, a transition can be taken to subtract $f_i$ from $\counter{x_i}$ and $s-f_i$ from $\counter{y_i}$. Thus, the constructed run can traverse the checking section and at the end reach configuration $\config{q}{0}$, as required.

	\proofsubparagraph{Second direction.}
	We now prove that if there is a run in $\Vv$ from $\config{p}{u}$ to $\config{q}{v}$ for some $\vec{v} \geq \vec{0}$, then there exists a word that is accepted by each automaton $\Dd_i$, for $i \in \set{1, \ldots, k}$.

	Let $\pi$ be a path such that
        $\run{\config{p}{u}}{\pi}{\config{q}{v}}$.  We may split $\pi$
        into subpaths $\pi_1, \ldots, \pi_m,\pi_{m+1}$ such that for
        every $j \in \set{1, \ldots, m}$, $\pi_j$ is a path from $p$
        back to $p$ that only visits $p$ at the start and the end, and
        $\pi_{m+1}$ is a path from $p$ to $q$.  Due to the overall
        structure of $\Vv$ (see~\cref{fig:vass-structure}), we know
        that for all $j \in \set{1, \ldots, m}$, $\pi_j$ is a path
        inside a $\sigma_j$-section, for some $\sigma_j\in \Sigma$,
        whereas $\pi_{m+1}$ traverses the checking section. We will
        argue that the word
	\[w\eqdef \sigma_1\,\sigma_2\,\cdots \,\sigma_m\]
	is accepted by every automaton $\Dd_i$, for $i\in \{1,\ldots,k\}$.

	For every $j\in \{1,\ldots,m\}$, let
	$\config{p}{u_j}$ be the intermediate configuration between $\pi_{j}$ and~$\pi_{j+1}$. That is, we have:
	\begin{equation*}
		\config{p}{u} 
		\xrightarrow{\pi_1} \config{p}{u_1}
		\xrightarrow{\pi_2} \config{p}{u_2}
		\xrightarrow{} \cdots \xrightarrow{} \config{p}{u_{m-1}}
		\xrightarrow{\pi_{m}} \config{p}{u_{m}}
		\xrightarrow{\pi_{m+1}} \config{q}{v}.
	\end{equation*}
	Initially, in $\vec{u}$, we know that for all $i \in \set{1, \ldots, k}$, the invariant $\counter{x_i} + \counter{y_i} = s$ and $\counter{x_i}\geq 1$ holds, and thus $\counter{x_i}$ contains the index of a state of $\Dd_i$.
	By the design of $\Vv$, we deduce that this invariant holds true also for all intermediate configurations $\config{p}{u_1}, \config{p}{u_2}, \ldots, \config{p}{u_m}$.
	
	Now, consider the sub-run $\run{\config{p}{u_j}}{\pi_j}{\config{p}{u_{j+1}}}$.
	In order to leave state $p$, go through the $\sigma_j$-section, and return to $p$, it must be the case that each part is successfully passed.
	Suppose that at in $\vec{u_j}$, it is true that $\counter{x_i} = a$ (and, due to the invariant, we know that $\counter{y_i} = s-a$).
	In order to pass the $i$-th part, it must be true that there must be a transition, say $t$, that subtracts $a$ from $\counter{x_i}$ and $s-a$ from $\counter{y_i}$.
	This is true because all transitions in the $i$-th part subtract $x$ from $\counter{x_i}$ and $s-x$ from $\counter{y_i}$ for some $1 \leq x \leq s$.
	Inside the $i$-th part of the $\sigma_j$-section, after $t$ is taken, there is a following transition, say $t'$, that adds $b$ to $\counter{x_i}$ and $s-b$ to $\counter{y_i}$ for some $1\leq b\leq s$.
	Now since $t$ and $t'$ are present in $i$-th part of the $\sigma_{f(j)}$-section, then we know there is a transition $(q_{i,a}, \sigma_{f(j)}, q_{i,b})$ in $\Dd_i$.
	Indeed, the run taking $t, t'$ is simulating $\Dd_i$ using $(q_{i,a}, \sigma_{j}, q_{i,b})$ (and so $\Dd_i$ has now read the $j$-th letter of the word). A straightforward induction now shows that for each $j\in \{1,\ldots,m\}$, vector $\vec{u}_j$ encodes, in its counters, the states in which the automata $\Dd_1,\ldots,\Dd_k$ are after reading the first $j$ letters of $w$.

	We conclude by analysing the final sub-run $\run{\config{p}{u_m}}{\pi_m}{\config{q}{v}}$, and arguing that the state in which each automaton $\Dd_i$ is after reading $w$ must be a final state. 
	This holds for the same reason used before: the transition that subtracts $a$ from $\counter{x_i}$ and $s-a$ from $\counter{y_i}$ in the $i$-th part of the checking section is only present when $q_{i,a}$ is a final state of $\Dd_i$.
	Thus, in order to reach $q$, one transition from each part of the checking section must be taken, which can only be true when the current state of automaton $\Dd_i$ is a final state.
	Thus, the word $w = \sigma_1 \, \sigma_2 \, \ldots \, \sigma_m$ is accepted by all the automata $\Dd_1,\ldots,\Dd_k$.
\end{claimproof}

\subsection{Example}
\label{app:xnl-reduction-example}
Let us illustrate the construction on the three DFAs over the alphabet
$\{a,b\}$ of \cref{fig:dfas}, and construct a $6$-VASS for which
coverability between two specified configurations holds if and only if
the intersection of the languages of the three DFAs is non-empty.

The languages of the three DFAs are the following.
\begin{itemize}
	\item $\Dd_1$ accepts all words of even length. 
	\item $\Dd_2$ accepts all words that contain at least one `$a$' and at least one `$b$'.
	\item $\Dd_3$ accepts all non-empty words where the first and second letter is not `$b$'.
\end{itemize}
The intersection of these languages is not empty;
the shortest words in the intersection of their languages are `$aaab$' and `$aaba$'.
\begin{figure} 
    \centering
    \begin{tikzpicture}
	\tikzstyle{state} = [circle, draw, black, line width = 0.4mm, inner sep = 1.5mm]
	\tikzstyle{transition} = [-{Stealth[width=1.5mm, length=1.8mm]}, black, line width = 0.4mm]

	\draw[line width = 0.4mm, rounded corners, fill = pastelamber] (-1, 1.25) rectangle (2.5, -1);
	\node[scale=1.25] at (0.75, 1.75) {$\Dd_1$};
	\node[state] (a1) at (0.2,0.125) {};
	\draw[-{Stealth[width=1.5mm, length=1.8mm]}, black, line width = 0.4mm] (a1) ++(-0.8,0) -- (a1);
	\node[circle, draw, black, line width = 0.4mm, inner sep = 1mm] at (0.2,0.125) {};
	\node[state] (a2) at (1.8,0.125) {};
	\draw[-{Stealth[width=1.5mm, length=1.8mm]}, black, line width = 0.4mm] (a1) edge[bend left = 15] node[above]{\small $a,b$} (a2);
	\draw[-{Stealth[width=1.5mm, length=1.8mm]}, black, line width = 0.4mm] (a2) edge[bend left = 15] node[below] {\small $a,b$} (a1);

	\draw[line width = 0.4mm, rounded corners, fill = pastelteal] (3, 1.25) rectangle (7.5, -1);
	\node[scale=1.25] at (5.25, 1.75) {$\Dd_2$};
	\node[state] (b1) at (4.5,0.75) {};
	\draw[-{Stealth[width=1.5mm, length=1.8mm]}, black, line width = 0.4mm] (b1) ++(-0.8,0) -- (b1);
	\node[state] (b2) at (6,0.75) {};
	\node[state] (b3) at (4.5,-0.5) {};
	\node[state] (b4) at (6,-0.5) {};
	\node[circle, draw, black, line width = 0.4mm, inner sep = 1mm] at (6,-0.5) {};
	\draw[transition, red] (b1) edge node[above]{\small $a$} (b2);
	\draw[transition] (b1) edge node[left]{\small $b$} (b3);
	\draw[transition] (b2) edge[loop below, in = -25, out = 25, distance = 8mm] node[right]{\small $a$} (b2);
	\draw[transition] (b2) edge node[right]{\small $b$} (b4);
	\draw[transition] (b3) edge[loop below, in = 155, out = 205, distance = 8mm] node[left]{\small $b$} (b3);
	\draw[transition] (b3) edge node[below]{\small $a$} (b4);
	\draw[transition] (b4) edge[loop below, in = -25, out = 25, distance = 8mm] node[right]{\small $a,b$} (b4);

	\draw[line width = 0.4mm, rounded corners, fill = pastelrose] (8, 1.25) rectangle (12.5, -1);
	\node[scale=1.25] at (10.25, 1.75) {$\Dd_3$};
	\node[state] (c1) at (9.5,0.75) {};
	\draw[-{Stealth[width=1.5mm, length=1.8mm]}, black, line width = 0.4mm] (c1) ++(-0.8,0) -- (c1);
	\node[state] (c2) at (11,0.75) {};
	\node[circle, draw, black, line width = 0.4mm, inner sep = 1mm] at (11, 0.75) {};
	\node[state] (c3) at (9.5,-0.5) {};
	\node[state] (c4) at (11,-0.5) {};
	\node[circle, draw, black, line width = 0.4mm, inner sep = 1mm] at (11,-0.5) {};
	\draw[transition] (c1) edge node[above]{\small $a$} (c2);
	\draw[transition] (c1) edge node[left]{\small $b$} (c3);
	\draw[transition] (c2) edge node[below, pos = 0.4]{\small $b$} (c3);
	\draw[transition] (c2) edge node[right]{\small $a$} (c4);
	\draw[transition] (c3) edge[loop below, in = 155, out = 205, distance = 8mm] node[left]{\small $a,b$} (c3);
	\draw[transition] (c4) edge[loop below, in = -25, out = 25, distance = 8mm] node[right]{\small $a,b$} (c4);
\end{tikzpicture}
    \caption{
        Three DFAs $\Dd_1$, $\Dd_2$, and $\Dd_3$.
    }
    \label{fig:dfas}
\end{figure} 

The resulting $6$-VASS $\Vv$ is presented in~\cref{fig:6vass}.
We set $s = 4$ to be the greatest number of states of any of the DFAs.
The VASS has three sections: the $a$-section (the top row of the figure), the $b$-section (the middle row of the figure), and the checking section (the bottom row of the figure).
There are three parts in each section, one for each of the three DFAs.

First, let us consider the second part in the $a$-section; here there is one pair of transitions for each $a$-transition in $\Dd_2$.
For example, consider the transitions \colour{red}{\emph{coloured in red}} in~\cref{fig:dfas} and~\cref{fig:6vass}.
The red pair of transitions in $\Vv$ corresponds to the red $a$-transition from the first state (top left) to the second state (top right) in $\Dd_2$.
The two transitions in $\Vv$ have the updates $(0,0,-1,-3,0,0)$ and $(0,0,2,2,0,0)$, respectively.
The first red transition in $\Vv$ checks if the current state of $\Dd_2$ is the first state ($\counter{x_2} = 1$ and $\counter{y_2} = s-1 = 3$).
The second red transition in $\Vv$ then updates the counters so that the current state of $\Dd_2$ is the second state ($\counter{x_2} = 2$ and $\counter{y_2} = s-2 = 2$).

Second, let us consider the third part of the checking section; here there are two transitions, one for each of the two final states of $\Dd_3$.
If the current counter values have $\counter{x_3} = 2$ and $\counter{y_3} = s-2 = 2$ or the counter values have $\counter{x_3} = 4$ and $\counter{y_3} = s-4 = 0$, then it is would be possible to pass through this part.
Indeed, when the counter values of $\Vv$ contain $\counter{x_3} = 2$ and $\counter{y_3} = 2$, then the current state of $\Dd_3$ is the second state (top right) which is a final state.
Similarly, when the counter values of $\Vv$ contain $\counter{x_3} = 4$ and $\counter{y_3} = 0$, then the current state of $\Dd_3$ is the fourth state (bottom right) which is also a final state.

As the intersection of the languages recognised by the DFAs is non-empty, we know by \cref{cl:correctness} that there is a run from $p(1,3,1,3,1,3)$ to $q(0,0,0,0,0,0)$.
The run corresponding to the word $aaab$ follows a path of over 50
transitions; listing every intermediate configuration of the run is
impractical, so we list below the main features of the run.
\begin{enumerate}[1.]
	\item From $p(1,3,1,3,1,3)$, pass through the $a$-section to reach $p(2,2,2,2,2,2)$.
	\item From $p(2,2,2,2,2,2)$, pass through the $a$-section to reach $p(1,3,2,2,4,0)$.
	\item From $p(1,3,2,2,4,0)$, pass through the $a$-section to reach $p(2,2,2,2,4,0)$.
	\item From $p(2,2,2,2,4,0)$, pass through the $b$-section to reach $p(1,3,4,0,4,0)$.
	\item From $p(1,3,4,0,4,0)$, pass through the checking section to finally reach $q(0,0,0,0,0,0)$.
\end{enumerate}

\begin{figure} 
    \centering
    \begin{tikzpicture}

	\node[example-part, fill = pastelamber] (a11) at (2.5,0) {};
	\node[small-state, fill = white] (a1S) at (1, 0) {};
	\node[small-state, fill = white] (a1E) at (4, 0) {};
	\node[small-state] (a1A) at (2.5, 0.4) {};
	\node[small-state] (a1B) at (2.5, -0.4) {};
	\draw[small-transition] (a1S) to [rounded corners = 1mm] (1.2, 0.4) -- node[scale=0.45, above, pos=0.45]{$(\text{-}1,\text{-}3,0,0,0,0)$} (a1A);
	\draw[small-transition] (a1S) to [rounded corners = 1mm] (1.2, -0.4) -- node[scale=0.45, above, pos=0.45]{$(\text{-}2,\text{-}2,0,0,0,0)$} (a1B);
	\draw[small-transition] (a1A) to [rounded corners = 1mm] node[scale=0.45, above, pos=0.5]{$(2,2,0,0,0,0)$} (3.7, 0.4) -- (a1E);
	\draw[small-transition] (a1B) to [rounded corners = 1mm] node[scale=0.45, above, pos=0.5]{$(1,3,0,0,0,0)$} (3.7, -0.4) -- (a1E);

	\node[example-part, fill = pastelteal] (a12) at (6,0) {};
	\node[small-state, fill = white] (a2S) at (4.5, 0) {};
	\node[small-state, fill = white] (a2E) at (7.5, 0) {};
	\node[small-state, red] (a2A) at (6, 0.6) {};
	\node[small-state] (a2B) at (6, 0.2) {};
	\node[small-state] (a2C) at (6, -0.2) {};
	\node[small-state] (a2D) at (6, -0.6) {};
	\draw[small-transition, red] (a2S) to [rounded corners = 1mm] (4.7, 0.6) -- node[scale=0.45, above, pos=0.45]{$(0,0,\text{-}1,\text{-}3,0,0)$} (a2A);
	\draw[small-transition] (a2S) to [rounded corners = 1mm] (4.7, 0.2) -- node[scale=0.45, above, pos=0.45]{$(0,0,\text{-}2,\text{-}2,0,0)$} (a2B);
	\draw[small-transition] (a2S) to [rounded corners = 1mm] (4.7, -0.2) -- node[scale=0.45, above, pos=0.45]{$(0,0,\text{-}3,\text{-}1,0,0)$} (a2C);
	\draw[small-transition] (a2S) to [rounded corners = 1mm] (4.7, -0.6) -- node[scale=0.45, above, pos=0.45]{$(0,0,\text{-}4,0,0,0)$} (a2D);
	\draw[small-transition, red] (a2A) to [rounded corners = 1mm] node[scale=0.45, above, pos=0.5]{$(0,0,2,2,0,0)$} (7.2, 0.6) -- (a2E);
	\draw[small-transition] (a2B) to [rounded corners = 1mm] node[scale=0.45, above, pos=0.5]{$(0,0,2,2,0,0)$} (7.2, 0.2) --  (a2E);
	\draw[small-transition] (a2C) to [rounded corners = 1mm] node[scale=0.45, above, pos=0.5]{$(0,0,4,0,0,0)$} (7.2, -0.2) --  (a2E);
	\draw[small-transition] (a2D) to [rounded corners = 1mm] node[scale=0.45, above, pos=0.5]{$(0,0,4,0,0,0)$} (7.2, -0.6) --  (a2E);

	\node[example-part, fill = pastelrose] (a12) at (9.5,0) {};
	\node[small-state, fill = white] (a3S) at (8, 0) {};
	\node[small-state, fill = white] (a3E) at (11, 0) {};
	\node[small-state] (a3A) at (9.5, 0.6) {};
	\node[small-state] (a3B) at (9.5, 0.2) {};
	\node[small-state] (a3C) at (9.5, -0.2) {};
	\node[small-state] (a3D) at (9.5, -0.6) {};
	\draw[small-transition] (a3S) to [rounded corners = 1mm] (8.2, 0.6) -- node[scale=0.45, above, pos=0.45]{$(0,0,0,0,\text{-}1, \text{-}3)$} (a3A);
	\draw[small-transition] (a3S) to [rounded corners = 1mm] (8.2, 0.2) -- node[scale=0.45, above, pos=0.45]{$(0,0,0,0,\text{-}2, \text{-}2)$} (a3B);
	\draw[small-transition] (a3S) to [rounded corners = 1mm] (8.2, -0.2) -- node[scale=0.45, above, pos=0.45]{$(0,0,0,0,\text{-}3, \text{-}1)$} (a3C);
	\draw[small-transition] (a3S) to [rounded corners = 1mm] (8.2, -0.6) -- node[scale=0.45, above, pos=0.45]{$(0,0,0,0,\text{-}4, 0)$} (a3D);
	\draw[small-transition] (a3A) to [rounded corners = 1mm] node[scale=0.45, above, pos=0.5]{$(0,0,0,0,2,2)$} (10.7, 0.6) -- (a3E);
	\draw[small-transition] (a3B) to [rounded corners = 1mm] node[scale=0.45, above, pos=0.5]{$(0,0,0,0,4,0)$} (10.7, 0.2) --  (a3E);
	\draw[small-transition] (a3C) to [rounded corners = 1mm] node[scale=0.45, above, pos=0.5]{$(0,0,0,0,3,1)$} (10.7, -0.2) --  (a3E);
	\draw[small-transition] (a3D) to [rounded corners = 1mm] node[scale=0.45, above, pos=0.5]{$(0,0,0,0,4,0)$} (10.7, -0.6) --  (a3E);

	\node[example-part, fill = pastelamber] (b11) at (2.5,-2.5) {};
	\node[small-state, fill = white] (b1S) at (1, -2.5) {};
	\node[small-state, fill = white] (b1E) at (4, -2.5) {};
	\node[small-state] (b1A) at (2.5, -2.1) {};
	\node[small-state] (b1B) at (2.5, -2.9) {};
	\draw[small-transition] (b1S) to [rounded corners = 1mm] (1.2, -2.1) -- node[scale=0.45, above, pos=0.45]{$(\text{-}1,\text{-}3,0,0,0,0)$} (b1A);
	\draw[small-transition] (b1S) to [rounded corners = 1mm] (1.2, -2.9) -- node[scale=0.45, above, pos=0.45]{$(\text{-}2,\text{-}2,0,0,0,0)$} (b1B);
	\draw[small-transition] (b1A) to [rounded corners = 1mm] node[scale=0.45, above, pos=0.5]{$(2,2,0,0,0,0)$} (3.7, -2.1) -- (b1E);
	\draw[small-transition] (b1B) to [rounded corners = 1mm] node[scale=0.45, above, pos=0.5]{$(1,3,0,0,0,0)$} (3.7, -2.9) -- (b1E);

	\node[example-part, fill = pastelteal] (b12) at (6,-2.5) {};
	\node[small-state, fill = white] (b2S) at (4.5, -2.5) {};
	\node[small-state, fill = white] (b2E) at (7.5, -2.5) {};
	\node[small-state] (b2A) at (6, -1.9) {};
	\node[small-state] (b2B) at (6, -2.3) {};
	\node[small-state] (b2C) at (6, -2.7) {};
	\node[small-state] (b2D) at (6, -3.1) {};
	\draw[small-transition] (b2S) to [rounded corners = 1mm] (4.7, -1.9) -- node[scale=0.45, above, pos=0.45]{$(0,0,\text{-}1,\text{-}3,0,0)$} (b2A);
	\draw[small-transition] (b2S) to [rounded corners = 1mm] (4.7, -2.3) -- node[scale=0.45, above, pos=0.45]{$(0,0,\text{-}2,\text{-}2,0,0)$} (b2B);
	\draw[small-transition] (b2S) to [rounded corners = 1mm] (4.7, -2.7) -- node[scale=0.45, above, pos=0.45]{$(0,0,\text{-}3,\text{-}1,0,0)$} (b2C);
	\draw[small-transition] (b2S) to [rounded corners = 1mm] (4.7, -3.1) -- node[scale=0.45, above, pos=0.45]{$(0,0,\text{-}4,0,0,0)$} (b2D);
	\draw[small-transition] (b2A) to [rounded corners = 1mm] node[scale=0.45, above, pos=0.5]{$(0,0,3,1,0,0)$} (7.2, -1.9) -- (b2E);
	\draw[small-transition] (b2B) to [rounded corners = 1mm] node[scale=0.45, above, pos=0.5]{$(0,0,4,0,0,0)$} (7.2, -2.3) --  (b2E);
	\draw[small-transition] (b2C) to [rounded corners = 1mm] node[scale=0.45, above, pos=0.5]{$(0,0,3,1,0,0)$} (7.2, -2.7) --  (b2E);
	\draw[small-transition] (b2D) to [rounded corners = 1mm] node[scale=0.45, above, pos=0.5]{$(0,0,4,0,0,0)$} (7.2, -3.1) --  (b2E);

	\node[example-part, fill = pastelrose] (b12) at (9.5,-2.5) {};
	\node[small-state, fill = white] (b3S) at (8, -2.5) {};
	\node[small-state, fill = white] (b3E) at (11, -2.5) {};
	\node[small-state] (b3A) at (9.5, -1.9) {};
	\node[small-state] (b3B) at (9.5, -2.3) {};
	\node[small-state] (b3C) at (9.5, -2.7) {};
	\node[small-state] (b3D) at (9.5, -3.1) {};
	\draw[small-transition] (b3S) to [rounded corners = 1mm] (8.2, -1.9) -- node[scale=0.45, above, pos=0.45]{$(0,0,0,0,\text{-}1, \text{-}3)$} (b3A);
	\draw[small-transition] (b3S) to [rounded corners = 1mm] (8.2, -2.3) -- node[scale=0.45, above, pos=0.45]{$(0,0,0,0,\text{-}2, \text{-}2)$} (b3B);
	\draw[small-transition] (b3S) to [rounded corners = 1mm] (8.2, -2.7) -- node[scale=0.45, above, pos=0.45]{$(0,0,0,0,\text{-}3, \text{-}1)$} (b3C);
	\draw[small-transition] (b3S) to [rounded corners = 1mm] (8.2, -3.1) -- node[scale=0.45, above, pos=0.45]{$(0,0,0,0,\text{-}4, 0)$} (b3D);
	\draw[small-transition] (b3A) to [rounded corners = 1mm] node[scale=0.45, above, pos=0.5]{$(0,0,0,0,3,1)$} (10.7, -1.9) -- (b3E);
	\draw[small-transition] (b3B) to [rounded corners = 1mm] node[scale=0.45, above, pos=0.5]{$(0,0,0,0,3,1)$} (10.7, -2.3) --  (b3E);
	\draw[small-transition] (b3C) to [rounded corners = 1mm] node[scale=0.45, above, pos=0.5]{$(0,0,0,0,3,1)$} (10.7, -2.7) --  (b3E);
	\draw[small-transition] (b3D) to [rounded corners = 1mm] node[scale=0.45, above, pos=0.5]{$(0,0,0,0,4,0)$} (10.7, -3.1) --  (b3E);

	\node[example-checking-part, fill = pastelamber] (c11) at (2.5,-4.6) {};
	\node[small-state, fill = white] (c1S) at (1, -4.6) {};
	\node[small-state, fill = white] (c1E) at (4, -4.6) {};
	\draw[transition] (c1S) -- node[scale=0.45, above]{$(-1,-3,0,0,0,0)$} (c1E);

	\node[example-checking-part, fill = pastelteal] (c21) at (6,-4.6) {};
	\node[small-state, fill = white] (c2S) at (4.5, -4.6) {};
	\node[small-state, fill = white] (c2E) at (7.5, -4.6) {};
	\draw[transition] (c2S) -- node[scale=0.45, above]{$(0,0,-4,0,0,0)$} (c2E);
	\node[example-checking-part, fill = pastelrose] (c31) at (9.5,-4.6) {};
	\node[small-state, fill = white] (c3S) at (8, -4.6) {};
	\node[small-state, fill = white] (c3E) at (11, -4.6) {};
	\draw[transition] (c3S) to [rounded corners = 1mm] (8.3, -4.45) to [rounded corners = 1mm] node[scale=0.45, above=-0.03]{$(0,0,0,0,-2,-2)$} (10.7, -4.45) -- (c3E); 
	\draw[transition] (c3S) to [rounded corners = 1mm] (8.3, -4.75) to [rounded corners = 1mm] node[scale=0.45, above=-0.03]{$(0,0,0,0,-4,0)$} (10.7, -4.75) -- (c3E);
	
	\draw[transition] (a1E) -- (a2S);
	\draw[transition] (a2E) -- (a3S);
	\draw[transition] (b1E) -- (b2S);
	\draw[transition] (b2E) -- (b3S);
	\draw[transition] (c1E) -- (c2S);
	\draw[transition] (c2E) -- (c3S);

	\node[state] (p) at (0, -2.5) {\large$p$};

	\draw[transition, rounded corners] (p.north) |- (a1S) {};
	\node (ar1) at (11.3, -0.5) {};
	\node (ar2) at (2, -1.25) {};
	\draw[headless-transition, rounded corners] (a3E) -| (ar1.center);
	\draw[headless-transition, rounded corners] (ar1.center) |- (ar2.center);
	\draw[transition] (ar2.center) to [rounded corners = 1mm] (0.8, -1.25) -- (p);

	\draw[transition] (p) -- (b1S);
	\node (br1) at (11.3, -3) {};
	\node (br2) at (2, -3.75) {};
	\draw[headless-transition, rounded corners] (b3E) -| (br1.center);
	\draw[headless-transition, rounded corners] (br1.center) |- (br2.center); 
	\draw[transition] (br2.center) to [rounded corners = 1mm] (0.8, -3.75) -- (p);

	\draw[transition, rounded corners] (p.south) |- (c1S) {};
	\node[state] (q) at (12, -4.6) {\large$q$};
	\draw[transition] (c3E) -- (q);

\end{tikzpicture}
    \caption{
        The $6$-VASS $\Vv$ constructed from the DFAs of \cref{fig:dfas}.
    }
    \label{fig:6vass}
\end{figure}

\section{Binary Encoding}
\label{sec:binary}
In this section, we show the \para\PSPACE-completeness of the
coverability problem parameterised either by dimension or size, when
the input is encoded in binary.  As in the case of a unary encoding,
the upper bound follows from the $2^{O(d)}\cdot\log n$
non-deterministic space bounds from Rackoff's and subsequent
works~\cite{Rackoff78,RosierY86,KunnemannMSSW23}.  Regarding the lower
bound, we rely on the \PSPACE-hardness of coverability in a fixed VAS
shown by Draghici, Haase, and
Ryzhikov~\cite{DraghiciHR24}.

\begin{theorem}\label{thm:binary-para-pspace-complete}
  The problems \lang{p\textsf-dim\textsf-COVERABILITY(VAS)} and
  \lang{p\textsf-size\textsf-COVERABILITY(VAS)} are
  \para\PSPACE-complete under FPT reductions when using a binary
  encoding.
\end{theorem}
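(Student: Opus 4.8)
The plan is to obtain both directions essentially off the shelf: the upper bound from Rackoff-style space analysis as packaged in \cref{cor:rackoff-algorithms}, and the lower bound from the \PSPACE-hardness of coverability in a \emph{fixed} VAS of Draghici, Haase, and Ryzhikov~\cite[Corollary~2]{DraghiciHR24}. Throughout, write $n$ for the size of the input when encoded in binary. For membership in \para\PSPACE, I would first handle the parameterisation by dimension. The non-deterministic algorithm of \cref{cor:rackoff-algorithms} already manipulates all the numbers it stores in binary, so it applies verbatim to a binary-encoded instance; its space usage is $2^{\Oh(d)}\cdot\log N$, where $N$ is the \emph{unary} size of the instance. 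Since every entry of $\vec V$, $\vec s$, and $\vec t$ has absolute value at most $2^{n}$, we have $N\le 2^{\Oh(n)}$ and hence $\log N=\Oh(n)$, so coverability with a binary encoding is solvable in non-deterministic space $2^{\Oh(d)}\cdot\Oh(n)$, that is, in FPT space; by Savitch's theorem this also gives a deterministic algorithm in space $\bigl(2^{\Oh(d)}\cdot\Oh(n)\bigr)^2=2^{\Oh(d)}\cdot\poly(n)$, again FPT space. By the characterisation of \para\PSPACE\ as the class of problems solvable in FPT space (\cite[Exercise~8.40]{flum}), this places \lang{p\textsf-dim\textsf-COVERABILITY(VAS)} in \para\PSPACE; composing with the PL reduction of \cref{ex:fpred} (which relies on $d\le\norm{\vec V}$) then puts \lang{p\textsf-size\textsf-COVERABILITY(VAS)} in \para\PSPACE\ as well.

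For \para\PSPACE-hardness under FPT reductions, I would invoke the sufficient condition recalled in the preliminaries (\cite[Corollary~2.16]{flum}): it suffices to pin down a single value of the parameter at which the problem is \PSPACE-hard under polynomial-time reductions. The result of Draghici, Haase, and Ryzhikov~\cite[Corollary~2]{DraghiciHR24} provides exactly this: there is a fixed VAS $\vec V_0$, of some fixed dimension $d_0$ and fixed binary size $\norm{\vec V_0}$, for which deciding whether a binary-encoded $\vec s$ covers a binary-encoded $\vec t$ in $\vec V_0$ is already \PSPACE-hard. This is simultaneously the slice of \lang{p\textsf-dim\textsf-COVERABILITY(VAS)} at parameter value $d_0$ and the slice of \lang{p\textsf-size\textsf-COVERABILITY(VAS)} at parameter value $\norm{\vec V_0}$, so both problems are \para\PSPACE-hard under FPT reductions, and combined with the membership above this finishes the proof.

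I do not anticipate a real obstacle, since all the hard work already lives in~\cite{KunnemannMSSW23,SchmitzS24,DraghiciHR24}; the one step calling for genuine care is the bookkeeping in the upper bound, namely checking that switching from a unary to a binary encoding inflates Rackoff's length bound --- and with it the sizes of the counters and of the step counter --- by at most a single exponential in $n$, so that the resulting non-deterministic (hence, via Savitch, deterministic) space usage is genuinely of the form $2^{\Oh(d)}\cdot\poly(n)$, i.e.\ FPT space landing in \para\PSPACE, rather than something coarser that would only yield an \XP\ bound.
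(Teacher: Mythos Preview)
Your proposal is correct and follows essentially the same route as the paper: Rackoff/K\"unnemann et al.\ for the upper bound via FPT space, then \cref{ex:fpred} for the size parameter, and Draghici--Haase--Ryzhikov plus \cite[Corollary~2.16]{flum} for hardness. The one small slip is that \cite[Corollary~2]{DraghiciHR24} actually gives a fixed \emph{VASS} (of dimension~$6$), so you need one more line invoking \cref{fct:vass-to-vas} to obtain a fixed VAS (of dimension~$9$) before concluding hardness for the VAS problems.
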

\begin{proof}
  Regarding the upper bound, recall that the coverability problem can
  be solved in non-deterministic space $2^{\Oh(d)}\cdot\log n$ where
  $d$ is the dimension and $n=\unarysize{\vec V}+\norm{\vec
  s}_1+\norm{\vec t}_1$ is the size of the input in
  unary~\cite[Corollary~3.4]{KunnemannMSSW23}.  This in turn provides
  an algorithm working in non-deterministic space $2^{\Oh(d)}\cdot N$
  where $N=\binarysize{\vec V}+d\big(\log_2(\norm{\vec s}_\infty+1)+\log_2(\norm{\vec
  t}_\infty+1)\big)$ is the size of the input in binary, showing
  that \lang{p\textsf-dim\textsf-COVERABILITY(VAS)} is
  in \para\PSPACE.  By the parameterised reduction
  from \cref{ex:fpred}, this also shows
  that \lang{p\textsf-size\textsf-COVERABILITY(VAS)} is
  in \para\PSPACE.

  Regarding the lower bound, \cite[Corollary~2]{DraghiciHR24} shows
  that there exists a fixed VASS $\Vv_{0}$ (thus of fixed size, and
  actually of dimension~$6$) such that the coverability problem
  for $\Vv_{0}$ is \PSPACE-hard under the assumption that the starting
  and the target configuration are encoded in binary.
  By \cref{fct:vass-to-vas}, there is a fixed VAS $\vec V_{\!0}$ (of
  dimension~$9$) with the same property, thus showing
  that \lang{p\textsf-size\textsf-COVERABILITY(VAS)}
  and \lang{p\textsf-dim\textsf-COVERABILITY(VAS)}
  are \para\PSPACE-hard under FPT reductions.\footnote{The \para\PSPACE-hardness of \lang{p\textsf-dim\textsf-COVERABILITY(VAS)} also follows from the fact that coverability in binary-encoded $2$-VASS is \PSPACE-hard~\cite{FearnleyJ15, BlondinFGHM15}, which implies that coverability in binary-encoded $5$-VAS is \PSPACE-hard. %
  }
\end{proof}

A consequence is that, while coverability with either parameterisation
is in~\XP\ when encoded in unary, if the problem with a binary
encoding were to belong to~\XP, then $\P=\PSPACE$.  Indeed, if this
were the case, then coverability in the fixed VAS $\vec V_{\!0}$ provided
by~\cite[Corollary~2]{DraghiciHR24} would be in~\P.

\section{Discussion and Open Problems}
\label{sec:conclusions}

Our results shed some light on the broader landscape of the
parameterised complexity of decision problems related to vector
addition systems.  Let us consider the coverability problem and the
reachability problem; both have meaningful parameterisations either by
dimension or by size.  A first set of questions is simply whether the
parameterised problem is in \FPT\ or not.  A second set of questions
concerns bounding the length of witness runs, and is motivated by the
discussion in \cref{sec:complexity}, where we saw that this approach
has been instrumental in the understanding of the complexity of
coverability.  Given an initial configuration $\vec s$ and a target
configuration $\vec t$ in a VAS~$\vec V$, a \emph{shortest witness} is
a run $\vec s\to^\ast_{\vec V}\vec t'$ for some $\vec
t'\sqsupseteq\vec t$ in the case of coverability or a run $\vec
s\to^\ast_{\vec V}\vec t$ in the case of reachability, such that the
number of steps in the run is minimal. For a parameter $k$ (the
dimension or the size of~$\vec V$), we will say that shortest
witnesses have \emph{fixed-parameter length} (FPT length) if there is
a computable function~$f$ and constant $c\in \NN$ such that shortest witnesses
have length at most~$f(k)\cdot n^c$.

This setup leads to four questions, that can each be asked with a
parameterisation by dimension or by size, and using a unary or a
binary encoding.
\begin{description}
\item[\hypertarget{Q1}{(CP)}] Is the parameterised coverability problem in~\FPT?
\item[\hypertarget{Q2}{(RP)}] Is the parameterised reachability problem in~\FPT?
\item[\hypertarget{Q3}{(CL)}] Are shortest coverability witnesses of FPT length?
\item[\hypertarget{Q4}{(RL)}] Are shortest reachability witnesses of FPT length?
\end{description}

\subparagraph{Negative Cases.} Quite a few of these questions have
negative answers.  In the case of parameterisation by dimension, we have the following:
\begin{itemize}
\item A positive answer to
  \hyperlink{Q1}{\textcolor{lipicsGray}{\textbf{\textsf{(CP)}}}} would
  entail $\XNL\subseteq\FPT$ in the case of a unary encoding by
  \cref{thm:unary-xnl-complete}, which would bring in particular a
  collapse of the \W- and \A-hierarchies down to \FPT, because
  $\AW[\SAT]\subseteq\closure{\XNL}{\func{fpt}}$
  by~\cite[Proposition~23]{ChenFG03} (recall
  \cref{fig:fp-hierarchies}).  It would entail $\FPT=\para\PSPACE$ in
  the case of a binary encoding by
  \cref{thm:binary-para-pspace-complete}, which is if and only if
  $\P=\PSPACE$.
\item \hyperlink{Q3}{\textcolor{lipicsGray}{\textbf{\textsf{(CL)}}}}
  has an unconditional negative answer regardless of the choice of
  encoding by \cref{fct:lipton}: Lipton's construction~\cite{Lipton76}
  yields a family of inputs to the coverability problem in
  dimension~$\Oh(d)$ where the shortest witnesses have length
  at least $n^{2^{d}}$.%
\item \hyperlink{Q2}{\textcolor{lipicsGray}{\textbf{\textsf{(RP)}}}}
  and \hyperlink{Q4}{\textcolor{lipicsGray}{\textbf{\textsf{(RL)}}}}
  have a negative answer regardless of the choice of encoding since
  the reachability problem in dimension~$\Oh(d)$ is hard for the
  fast-growing complexity class
  $\mathbf{F}_{d}$~\cite{Leroux22,CzerwinskiO22}, and the shortest
  witnesses in their instances have length at least $F_d(n)$ where
  $F_d$ is the $d$th fast-growing function (see \cite{Schmitz16} for a
  reference on fast-growing complexity).
\end{itemize}
Turning our attention to parameterisation by size and a
binary encoding, we have:
\begin{itemize}
\item A positive answer to either
  \hyperlink{Q1}{\textcolor{lipicsGray}{\textbf{\textsf{(CP)}}}} or
  \hyperlink{Q2}{\textcolor{lipicsGray}{\textbf{\textsf{(RP)}}}} would
  again entail $\P=\PSPACE$ by
  \cref{thm:binary-para-pspace-complete}.
\item Finally,
  \hyperlink{Q3}{\textcolor{lipicsGray}{\textbf{\textsf{(CL)}}}} and
  \hyperlink{Q4}{\textcolor{lipicsGray}{\textbf{\textsf{(RL)}}}} have
  a negative answer, since with a binary encoding and a fixed VAS of constant size, one can easily ensure that shortest witnesses have length exponential in the
  size of the input.
\end{itemize}

\subparagraph{Open Questions.}  
Our questions are therefore only relevant in the case of a unary encoding and a parameterisation by size.  
In that setting, 
\hyperlink{Q3}{\textcolor{lipicsGray}{\textbf{\textsf{(CL)}}}} is essentially Question~3 by Czerwiński~\cite{Czerwinski25}\footnote{Hack
  observed~\cite[page~171]{Hack76} that shortest coverability witnesses
  were of linear length if, in addition to the VAS, the initial
  configuration was fixed (thanks to the coverability tree
  construction~\cite{KarpM69}).  On a similar note, Rackoff's
  analysis~\cite{Rackoff78} even yields a constant bound on the length
  of shortest coverability witnesses if we fix both the VAS and the
  target configuration.} and
\hyperlink{Q4}{\textcolor{lipicsGray}{\textbf{\textsf{(RL)}}}} is essentially a conjecture due Hack~{\cite[page 172]{Hack76}} that has recently been restated by Jecker~\cite{Jecker22}.  
Both of these questions were originally phrased in terms of an
  $\Oh(n)$ upper bound in a fixed VAS; here we cast them in the
  terminology of parameterised complexity.

Those two questions are connected to
\hyperlink{Q1}{\textcolor{lipicsGray}{\textbf{\textsf{(CP)}}}} and
\hyperlink{Q2}{\textcolor{lipicsGray}{\textbf{\textsf{(RP)}}}}: an FPT
length on shortest witnesses implies the existence of a
nondeterministic algorithm that guesses and checks this witness. With
a unary encoding, this entails that the problem is in~\para\NP%
, while with a binary encoding, this
entails that the problem is in~\para\PSPACE%
.  Recall that all we know at the moment are \XNL\ and
\para\PSPACE\ upper bounds for coverability with unary and binary
encoding, and no reasonable upper bounds for reachability.  Thus
\hyperlink{Q3}{\textcolor{lipicsGray}{\textbf{\textsf{(CL)}}}} and
\hyperlink{Q4}{\textcolor{lipicsGray}{\textbf{\textsf{(RL)}}}} provide
an original angle of attack on \hyperlink{Q1}{\textcolor{lipicsGray}{\textbf{\textsf{(CP)}}}} and
\hyperlink{Q2}{\textcolor{lipicsGray}{\textbf{\textsf{(RP)}}}}, respectively, and a positive answer would yield
\begin{itemize}
\item the \para\PSPACE-completeness of reachability parameterised by
  size with a binary encoding, and
\item an indication that the complexity of the problems parameterised
  by size with a unary encoding is likely lower than \XNL\ or
  \para\NP.
\end{itemize}
In summary, we believe that \hyperlink{Q1}{\textcolor{lipicsGray}{\textbf{\textsf{(CP)}}}} and
\hyperlink{Q2}{\textcolor{lipicsGray}{\textbf{\textsf{(RP)}}}} %
are excellent open questions in parameterised complexity that deserve further investigation, whereas \hyperlink{Q3}{\textcolor{lipicsGray}{\textbf{\textsf{(CL)}}}} and
\hyperlink{Q4}{\textcolor{lipicsGray}{\textbf{\textsf{(RL)}}}} are auxiliary conjectures whose resolution might be very insightful for \hyperlink{Q1}{\textcolor{lipicsGray}{\textbf{\textsf{(CP)}}}} and
\hyperlink{Q2}{\textcolor{lipicsGray}{\textbf{\textsf{(RP)}}}}.

As a further indication on the easiness of coverability and
reachability parameterised by size with a unary encoding, note that in
a \emph{fixed} VAS $\vec V_{\!0}$, the coverability problem and the
reachability problem are \emph{sparse} languages, meaning that there
exists a polynomial $p$ such that, for all~$n$, the number of
instances $x$ of the language of length $|x|=n$ is bounded by~$p(n)$.
Indeed, as $\vec V_{\!0}$ and thus its dimension~$d$ are fixed, there
are at most $\binom{n}{2d}\leq n^{2d}$ possible pairs of initial and
target configurations such that $\norm{\vec s}_1+\norm{\vec t}_1=n$.
This tells us that if there exists a fixed~$\vec V_{\!0}$ such that either
coverability or reachability for $\vec V_{\!0}$ with unary encoding
is \NP-hard, then $\P=\NP$ by Mahaney's
Theorem~\cite[Theorem~3.1]{Mahaney82}. Similarly, \NL-hardness
entails $\L=\NL$~\cite[Corollary~3]{CaiS00}.

\bibliography{references}

\end{document}